\newtheorem{theorem}{Theorem}
\newtheorem*{remark}{Theorem}
\newcommand{\mycomment}[1]{}
\newcommand{\estimateE}[2]{\mathbb{E}_{#1}\left[#2\right]} 
\newcommand{\fkl}{\textrm{KL}(p\,||\,q_\theta)}
\newcommand{\rkl}{\textrm{KL}(q_\theta\,||\,p)}
\newcommand{\normdist}{\mathcal{N}} 
\newcommand{\realset}{\mathbb{R}} 
\newcommand{\id}{\mathbb{I}} 
\newcommand{\vect}[1]{\boldsymbol{#1}} 
\newcommand{\Dphi}{\mathcal{D}[\phi]}
\newcommand{\wbar}{{\bar{w}} }
\DeclareMathOperator*{\argminAlg}{\arg\!\min}
\newtheorem{definition}[theorem]{Definition}
\definecolor{THblue}{rgb}{0.47, 0.62, 0.8}
\newcommand{\tobias}[2][\color{THblue}]{{#1#2}}
\begin{document}
\long\def\/*#1*/{}

\title{Detecting and Mitigating Mode-Collapse for Flow-based Sampling of Lattice Field Theories}

\author{Kim A. Nicoli}
\email{kim.a.nicoli@gmail.com}
\affiliation{Transdisciplinary Research Area (TRA) Matter, University of Bonn, Germany}
\affiliation{Helmholtz Institute for Radiation and Nuclear Physics (HISKP), Bonn, Germany}
\affiliation{Berlin Institute for the Foundations of Learning and Data (BIFOLD), Berlin, Germany}

\author{Christopher J. Anders}
\affiliation{Berlin Institute for the Foundations of Learning and Data (BIFOLD), Berlin, Germany}
\affiliation{Machine Learning Group, Technische Universit\"{a}t Berlin, Berlin, Germany}

\author{Tobias Hartung}
\affiliation{Northeastern University - London, London, United Kingdom}

\author{Karl Jansen}
\affiliation{CQTA, Deutsches Elektronen-Synchrotron DESY, Zeuthen, Germany}

\author{Pan Kessel}
\email{pan.kessel@gmail.com}
\affiliation{Prescient Design, gRED, Roche, Basel, Switzerland}

\author{Shinichi Nakajima}
\affiliation{Berlin Institute for the Foundations of Learning and Data (BIFOLD), Berlin, Germany}
\affiliation{Machine Learning Group, Technische Universit\"{a}t Berlin, Berlin, Germany}
\affiliation{RIKEN Center for AIP, Tokyo, Japan}


\begin{abstract}
We study the consequences of mode-collapse of normalizing flows in the context of lattice field theory. Normalizing flows allow for independent sampling. For this reason, it is hoped that they can avoid the tunneling problem of local-update MCMC algorithms for multi-modal distributions. In this work, we first point out that the tunneling problem is also present for normalizing flows but is shifted from the sampling to the algorithm's training phase. Specifically, normalizing flows often suffer from mode-collapse for which the training process assigns vanishingly low probability mass to relevant modes of the physical distribution. This may result in a significant bias when the flow is used as a sampler in a Markov-Chain or with Importance Sampling. We propose a metric to quantify the degree of mode-collapse and derive a bound on the resulting bias. Furthermore, we propose various mitigation strategies in particular in the context of estimating thermodynamic observables, such as the free energy. 
\end{abstract}

\maketitle

\section{Introduction}\label{sec:intro}
Using normalizing flows for sampling in lattice field theory has gained significant attention over the last few years. Several works have been carried out in the domain of scalar field theories~\cite{albergo2019flow,nicoli2021estimation,hackett2021flow,nicoli2021machine,gerdes2022learning,caselle2022stochastic,caselle2022stochasticproceedings,singha2022conditional,albandea2022learning,pawlowski2022flow}, $U(1)$~\cite{flowsforlattice2,finkenrath2022tackling} and $SU(N)$~\cite{flowsforlattice3,PhysRevLett.128.032003,PhysRevLett.128.032003,bacchio2022learning} pure gauge theories, and fermionic gauge theories~\cite{albergo2021flow,abbott2022gauge}. This rapid development is attributed to the appealing conceptual properties of flow-based sampling. A well-trained flow approximately acts as a trivializing map \cite{luscher2010trivializing} and therefore can significantly reduce the integrated autocorrelation time of physical observables. The practical obstruction to harnessing this conceptual advantage is that the training process becomes increasingly challenging as the dimensionality of the lattice increases, resulting in poor volume scaling \cite{del2021machine,del2021efficient,de2021scaling,abbott2022aspects}. Furthermore, it is well-known that generative models struggle to learn long-range correlations \cite{bengio1994learning} which is crucial as a critical point is approached. When the continuum limit of the theory is taken, both challenges manifest simultaneously: the volume needs to be increased as the critical point is approached. As a result, it remains an open question whether useful architectures can be found for addressing critical slowing down in the continuum limit.

Another conceptually appealing property of normalizing flows is that they allow for independent sampling, thus making flow densities suitable for being combined with Metropolis-Hastings accept-reject schemes. This approach is often referred to as Neural-MCMC~\cite{nicoli2020asymptotically,albergo2019flow,doi:10.1073/pnas.2109420119,grenioux2023sampling}. As a result, it may be hoped that they can avoid the tunneling problem which arises when local update MCMC algorithms are applied to theories that have degenerate minima separated by high action barriers. However, normalizing flows are typically trained by self-sampling in the context of lattice field theory \cite{albergo2019flow}. As we will discuss, this bears the risk that the training will assign vanishing low probability mass to some of the modes of the theory \cite{nicoli2021machine, nicoli2021estimation, hackett2021flow}, since the training objective will not strongly penalize this. If mode-collapse happens, certain modes of the theory will not be probed by the sampler. This problem, therefore, leads to substantially biased estimators of physical observables as shown in \cref{fig:sketchrevKL}.

In our work we study mode-collapse and the more general mode-mismatch phenomenon, both theoretically and numerically. We first discuss in detail the mode-seeking nature of the standard self-sampling-based training procedure which corresponds to minimizing the reverse Kullback-Leibler (KL) divergence~\cite{hackett2021flow}. We compare this to an alternative training procedure which is based on minimizing the forward (as opposed to the reverse) KL divergence and review why it is equivalent to maximum likelihood training. This objective has the advantage that it is substantially less vulnerable to mode-collapse but has the disadvantage that it requires representative configurations sampled from the theory. In many applications, this prevents this objective from being of any use since if such configurations are available, we can directly measure physical observables on them and a flow is not necessary. However, we point out that there is an important exception to this: for thermodynamic observables, such as the free energy, it is still useful to train a flow. This is because these observables are typically obtained by integration through the parameter space of the theory and thus require a significant number of Markov chains along a discretized trajectory in the parameter space. By training a flow on samples generated at a single point in parameter space, we can completely avoid the need for these additional Markov chains. In this important scenario, it is thus sensible and, as we argue, advisable to use forward KL training for the flow to significantly reduce mode-collapse. Besides modifying the training procedure, we also propose to mitigate mode-collapse by combining two flow-based estimators for the free energy. As a side remark, we note that concurrent works have been proposing strategies, alternative to the Forward KL objective, trying to mitigate mode collapse. These include more stable path gradient estimators~\cite{pmlr-v162-vaitl22a, Vaitl2022GradientsSS}, learning deformed target distributions~\cite{mate2023learning} and annealed importance sampling~\cite{midgley2022flow}.

We then study the bias induced by mode-collapse theoretically. Specifically, we derive a bound on the bias of the estimator for physical observables. This allows us to propose a natural metric to quantify the degree of mode-collapse of the sampler. 

The effectiveness of our proposed methods is then demonstrated on a two-dimensional $\phi^4$ scalar theory.

We stress that our study focuses on the estimation of the free energy, an example of thermodynamic quantity involving the partition function~\footnote{Other examples of such thermodynamic observables not directly accessible with HMC are, for instance, entropy and pressure.}, a crucial subset of physical observables in lattice field theory \cite{caselle2018qcd}.
Estimating these observables with standard Markov-Chain-based methods requires sampling configurations at many different values in parameter space and integrating free energy differences from a known reference value, see previous works for more details~\cite{nicoli2021machine,nicoli2021estimation}. This approach is computationally expensive since it often requires a significant number of HMC chains along the trajectory in parameter space, and crucially leads to high uncertainty, as errors from each chain accumulate upon integration. This problem becomes more severe when one needs to cross a phase transition. There, integrated autocorrelation times explode, thus resulting in larger errors for each Markov chain.  
For this reason, training a normalizing flow using a forward KL objective can often be advantageous: training a normalizing flow requires samples from only a single Markov chain at the target point in parameter space and thus allows us to circumvent the need for any additional chains along the trajectory through parameter space.

We emphasize that the intricacies of training a normalizing flow for multimodal distributions in the context of lattice field theories have been already discussed in \cite{hackett2021flow}. 
Our work builds on this reference but is different in the sense that we consider thermodynamic observables. As explained above, these observables cannot be estimated on the Markov-Chain samples at the target point without the need for additional Markov chains for different coupling values. As a result, training of normalizing flows using the forward KL objective is particularly natural for the estimation of thermodynamic observables.

\begin{figure}[t]
     \centering
     \includegraphics[width=0.99\columnwidth]{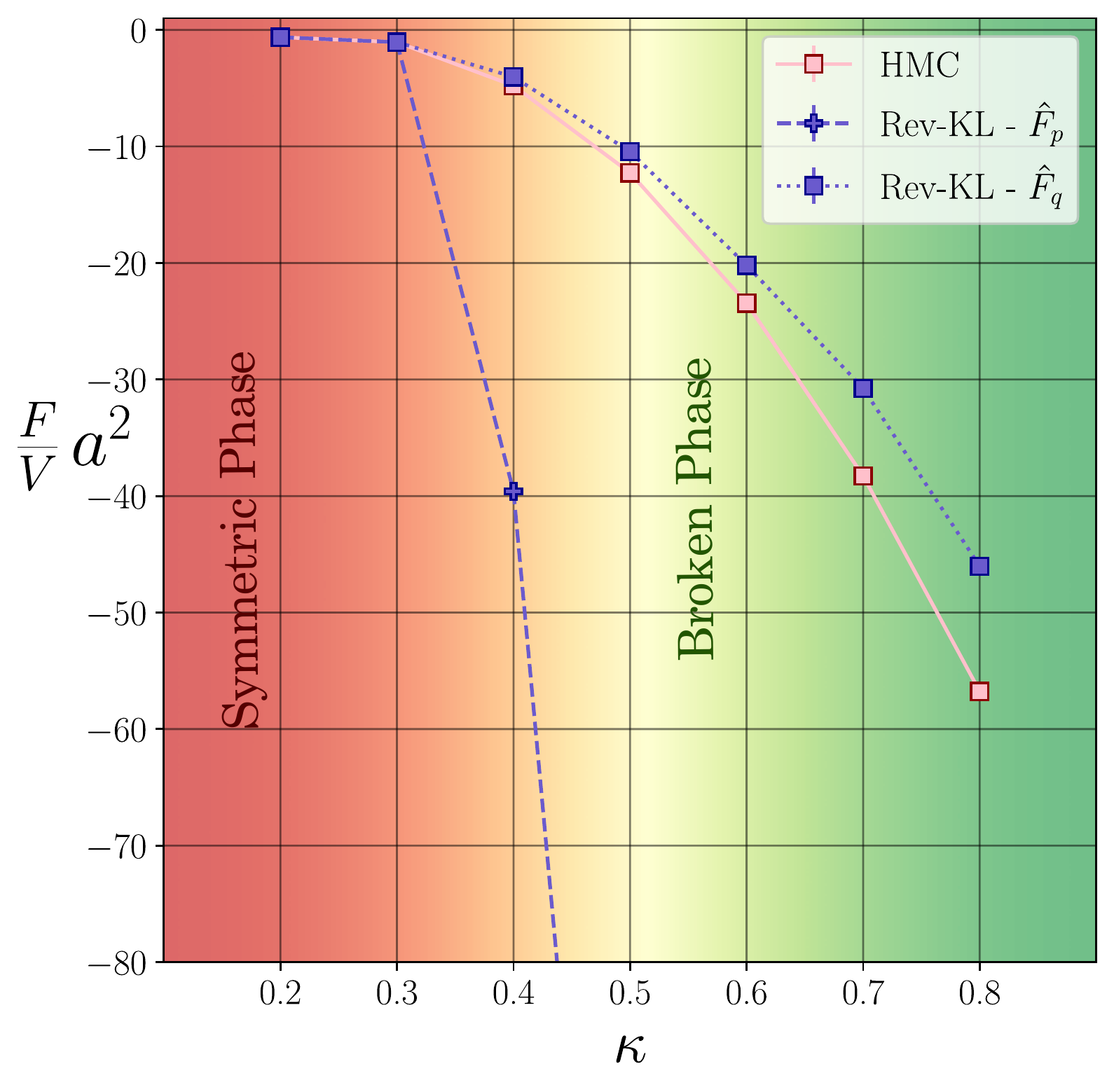}
     \caption{\textbf{Estimation of free energy density in broken and symmetric phases using a reverse-KL trained flow} - estimation of the free energy density using the approach proposed in~\cite{nicoli2021estimation} and \cite{nicoli2021machine}. The second-order phase transition is represented with a color gradient from red to green in the background. The free energy density is estimated using samples drawn from the flow and the target distribution respectively. The flow estimates (purple) are compared to the HMC baseline (pink). The experiments consider lattices $\Lambda=64\times 8$ at fixed coupling $\lambda=0.022$ for the $\phi^4$-theory as in~\cite{nicoli2021estimation}. In the broken phase (green) of the theory, e.g., $\kappa\geq0.3$, the target density has two modes with very little tunneling probability between the two. Using a deep generative sampler, i.e., a flow, trained with reverse-KL in the broken regime leads to biased estimates. This problem lays the foundation of the study presented in this work. }\label{fig:sketchrevKL}
\end{figure}
\section{Training a Generative Model}
\label{sec:normflows}
Normalizing flows~\cite{rezende2015variational,kobyzev2020normalizing,nfreview} are a particular class of generative models giving access to an analytic form of the likelihood.
While this work focuses on flows for concreteness, we stress that the theoretical arguments made in the following sections hold for any generative model allowing for exact likelihood estimation. For such models, a variational density $q_\theta$, the \textit{sampler}, parameterized by a set of weights $\theta$, is optimized to approximate the target density of the lattice field theory
\begin{align}
    p(\phi) = \frac{1}{Z} \exp(-S(\phi)) \,,
\end{align}
where $Z$ is the partition function and $S$ is the action of the theory. 

During the training of a normalizing flow, an efficient transformation to map a base density $q_z$ into a non-trivial target is learned. In practice, the base distribution is chosen such that it allows for efficient sampling. Common choices for the base density are therefore normal or uniform distributions.

The flow uses a diffeomorphism $f$ between the base space $\mathcal{Z}$ and the configuration space $\mathcal{X}$ hence
\begin{align}
    f_\theta:\,z\in \mathcal{Z}\sim q_z \to x=f_\theta(z)\in\mathcal{X} \sim q_\theta\,.
\end{align}
The diffeomorphism $f_\theta$ is a composition of bijective transformations $f_\theta^i$ referred to as \textit{coupling blocks}. Each of these blocks satisfies the following requirements:
\begin{enumerate}
    \item $f_\theta^i$ is a bijection,
    \item both $f_\theta^i$ and its inverse are in $\mathcal{C}^\infty$.  
    \item the determinant of the Jacobian is efficient to evaluate.
\end{enumerate}
The inverse of the transformation $f^{-1}_\theta(x)=z$ therefore always exists by construction. Leveraging these properties, an analytic expression for the likelihood of the flow-based model reads
\begin{align}
    q_{{\theta}}(x) &= q_z(f^{-1}_{{\theta}}(x)) \, \left| \frac{\textrm{d}f_{{\theta}}}{\textrm{d}{z}} \right|^{-1} = q_z({z}) \, \left| \frac{\textrm{d}f_{{\theta}}}{\textrm{d}{z}} \right|^{-1}\,.
\end{align}
Different coupling blocks satisfying the requirements above have been proposed; these include Non-Linear Independent Component Estimation (NICE)~\cite{dinh2014nice}, Real Non-Volume Preserving (RealNVP)~\cite{dinh2016density}, and Generative flow (GLOW)~\cite{kingma2018glow}. We refer to~\cite{nfreview,kobyzev2020normalizing} for an overview of the existing coupling blocks and further technical details.

\subsection{The Forward- and Reverse-KL Divergences}
\label{sec:fwdVSrev}
During training, the normalizing flow is optimized by density matching. It is common practice to minimize the so-called KL divergences to this end although other types of generalized divergences can be used \cite{042471a243c145c38cf15ad98717f998,minka2005divergence,amari1985differential,10.1214/aoms/1177729694,JMLR:v12:reid11a,6832827,61115,JMLR:v6:banerjee05b}. As we will discuss in this section, and in \cref{sec:modecollapse}, choosing an appropriate divergence is crucial to ensure successful training. 

\begin{figure}[t]
     \centering
     \includegraphics[width=0.99\columnwidth]{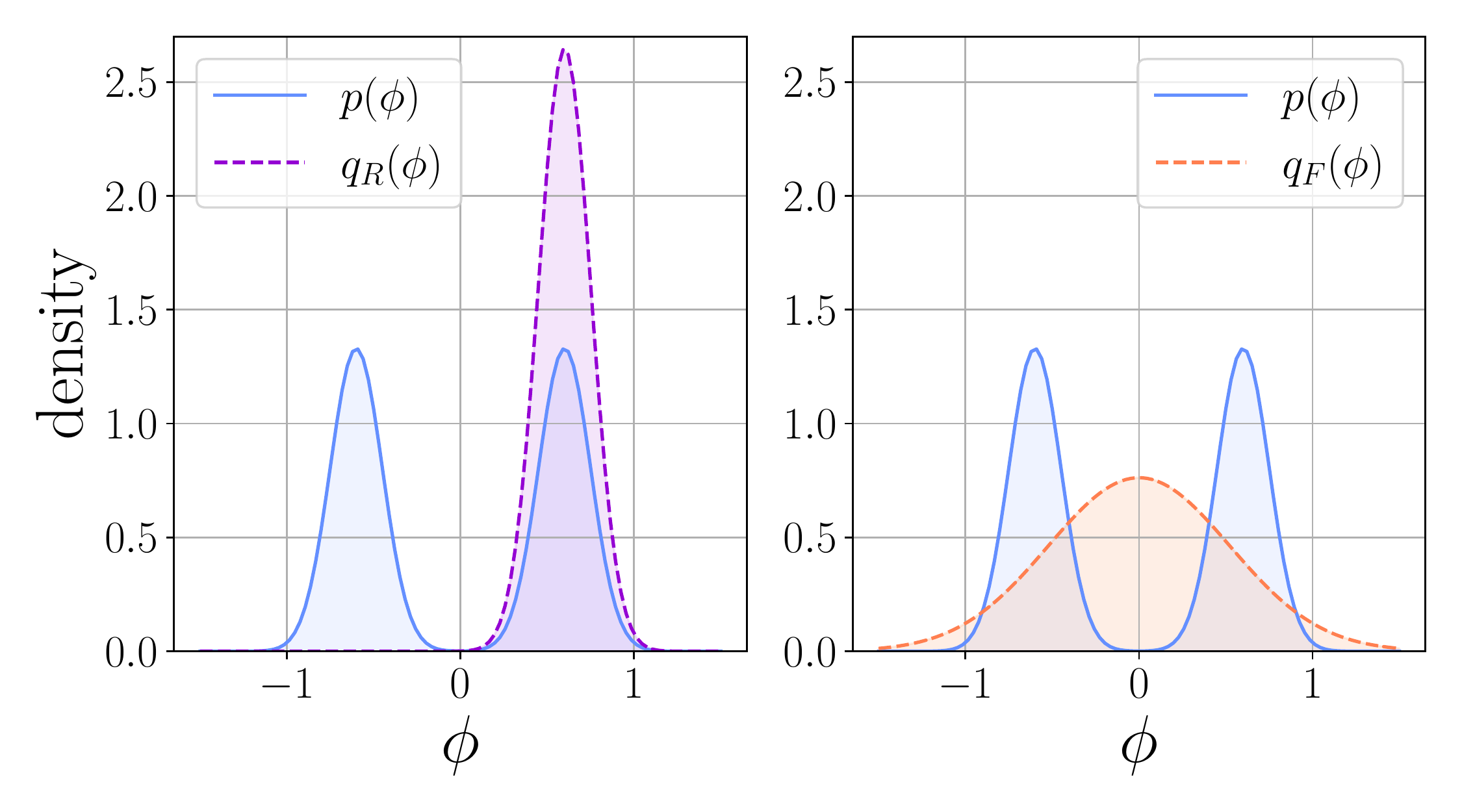}
     \caption{\textbf{Comparison between a reverse-KL (purple) and forward-KL (orange) optimization approach} - a sketch of normalizing flows trained with both KL objectives as described in \cref{sec:fwdVSrev}. Training with a reverse KL shows a mode-seeking behavior and is thus prone to mode-dropping. The forward-KL has instead a mode-covering behavior which leads to larger support over the sampling space.}\label{fig:fwdrevKL}
\end{figure}

The so-called reverse-KL divergence reads
\begin{align}
\label{eq:revkl}
    \rkl &= \int  \mathcal{D}[\phi]\, q_\theta(\phi) \ln \frac{q_\theta(\phi)}{p(\phi)}
\end{align}
where $\mathcal{D}[\phi]$ represents the measure of a high-dimensional integral.
It is worth stressing that the KL divergence is not symmetric hence
\begin{align}
\label{eq:asymmetricKL}
    \rkl \neq \fkl\,.
\end{align}
The right-hand side of \cref{eq:asymmetricKL} is usually referred to as the \textit{forward-KL} which can be written as an expectation value with respect to the target density $p$
\begin{align}
\label{eq:fwdkl}
    \fkl &= \int  \mathcal{D}[\phi] \,p(\phi) \ln \frac{p(\phi)}{q_\theta(\phi)}\,.
\end{align}
These two choices for the divergence lead to different training procedures, as we will discuss in sections \cref{sec:revkl} and \cref{sec:fwdkl}. We also note that the use of reverse and forward KL is not mutually exclusive. In the context of Quantum Chemistry~\cite{noe2019boltzmann}, for instance, a combination of the two is typically chosen.

\subsubsection{Reverse-KL: training by self-sampling}
\label{sec:revkl}
The reverse KL divergence is the standard choice for training normalizing flows on the lattice.
This is because lattice field theory comes with an action $S$ which is known in closed form -- in contrast to many other machine learning applications.  

The reverse KL divergence \eqref{eq:revkl} can be approximated by a Monte-Carlo estimate~\cite{nicoli2020asymptotically,nicoli2021estimation} as follows
\begin{align}
    \rkl &= \estimateE{q_\theta}{\ln \frac{q_\theta(\phi)}{p(\phi)}} \nonumber \\
    & \approx \frac{1}{N} \sum_{i=1}^N \left( S(\phi_i) +\ln q_\theta(\phi_i) \right) + \textrm{const.} \,.
\end{align}
Here, the field configurations are sampled from the flow, i.e. $\phi_i \sim q_\theta$, and thus the training relies on self-sampling. In particular, the partition function $Z$ contributes by a shift term that is constant with respect to the parameters of the flow and thus can be ignored for optimization by gradient descent.

Training using a reverse-KL is therefore very efficient because it does not require samples from the target density $p$ due to self-sampling. Unfortunately, this comes at a cost as this objective is known to be prone to mode-collapse. The sketch on the left-hand side of \cref{fig:fwdrevKL} shows that a reverse-KL-trained flow tends to focus its support on a subset of the modes when the target density is multimodal. This undesirable behavior strongly affects the reverse KL as the dropped modes are not probed in the self-sampling process.

This mode-seeking nature of the reverse KL represents a major drawback and limits the applicability of this framework when the physical density to be learned has more than one mode~\cite{hackett2021flow,nicoli2021machine,mate2023learning}.

\subsubsection{Forward-KL: training by maximum likelihood}
\label{sec:fwdkl}
The forward KL divergences can be written as an expectation value with respect to the target density $p$ and thus also be approximated by Monte-Carlo
\begin{align}
    \fkl &=  \estimateE{p}{\ln \frac{p(\phi)}{q_\theta(\phi)}} \nonumber \\
    &= -\frac{1}{N} \sum_{i=1}^N \log q_\theta(\phi_i) + \textrm{const.}  \,.
\end{align}
In contrast to the reverse KL divergence, the samples are here to be drawn from the density $p$ of the theory, i.e. $\phi_i \sim p$. As can be seen from the equation, the minimization of the forward KL corresponds to maximizing the likelihood of the model. In the machine learning literature, the forward training procedure is thus also known as \emph{maximum likelihood training}. Indeed, this has already been explored in the context of lattice field theory~\cite{hackett2021flow}.

This training procedure has the advantage that it is mode-covering since all modes of the physical target density $p$ will necessarily be probed in training.
It has however the disadvantage that it requires samples from the target $p$. In lattice applications, these are typically generated by a Monte-Carlo algorithm, such as HMC. However, if these configurations are available, one can directly measure physical observables on them and there is therefore no need to train a flow in the first place. 

One may thus wonder if this training procedure is of any use then. For thermodynamic observables however, such as the free energy, one typically does not only require a single Markov chain for the target density $p$ but a whole series of Markov chains along a discretized trajectory in the parameter space of the theory. As we will review in the next section, a flow allows us to completely avoid the need for these additional Markov chains. For the important class of thermodynamic observables, forward KL training is thus well-justified and, as we will show, advisable. 
 
\section{Reliable Estimators in Presence of Mode-Collapse}
\label{sec:modecollapse}
Combining deep generative models, e.g. normalizing flows, with neural importance sampling (NIS) has been shown to be a fruitful approach for estimating thermodynamic observables in lattice field theory~\cite{nicoli2021estimation,nicoli2021machine}, statistical mechanics~\cite{nicoli2020asymptotically}, and chemistry~\cite{noe2019boltzmann,wirnsberger2020targeted,kohler2023rigid}. This approach enables direct estimation of the free energy as well as other thermodynamic observables because flow-based sampling allows for estimating those observables at arbitrary points in the parameter space. Remarkably, this is in stark contrast to standard Markov-Chain Monte-Carlo methods which instead require non-trivial integration in the parameter space~\cite{nicoli2021estimation}. More specifically, NIS allows the computation of a direct Monte-Carlo estimate of the partition function which is crucial for many thermodynamic observables~\cite{nicoli2021estimation} such as entropy and free energy 
\begin{align}
\label{eq:freeEn}
    F=-T\ln Z\,.
\end{align}
In the following, we revise two different estimators of the free energy, namely the \textit{p-estimator} and the \textit{q-estimator}. These allow estimation using samples drawn from the target and the generative model respectively~\cite{nicoli2021machine}.
We stress that these estimators are applicable for any generative model that has a tractable likelihood, such as normalizing flows~\cite{kobyzev2020normalizing}, autoregressive neural networks~\cite{vanOord2016pixelrnn}, and diffusion models~\cite{song2020score}.

\subsection{Different Estimators for the Partition Function}\label{sec:F_estimators}
Given a generative model, irrespective of whether it has been trained using reverse-KL or forward-KL, the resulting sampler $q_\theta$ is an approximation for the target density $p$. Leveraging this, recent works proposed to estimate the partition function of a physical system~\cite{nicoli2020asymptotically,nicoli2021estimation} directly at a given point in parameter space. This approach samples lattice configurations from the generative model $q_\theta$ and estimates the partition function $Z$ with a so-called \textit{q-estimator} with
\begin{align}\label{eq:rev_est}
        Z_q \equiv \mathbb{E}_{\phi \sim q_\theta} \left[ \frac{e^{-S(\phi)}}{q_\theta(\phi)} \right] \approx \frac{1}{N} \sum_{i=1}^N \frac{e^{-S(\phi_i)}}{q_\theta(\phi_i)} \equiv \hat{Z}_{q}  \,,  && \phi_i \sim q_\theta \,. 
\end{align}
Alternatively, when samples from the target $p$ are available, i.e., using a thermalized Markov chain, one can estimate the inverse partition function with the so-called \textit{p-estimator}
\begin{align}\label{eq:fwd_est}
        Z^{-1}_p \equiv \mathbb{E}_{\phi \sim p} \left[ \frac{q_\theta(\phi)}{e^{-S(\phi)}} \right] \approx \frac{1}{N} \sum_{i=1}^N \frac{q_\theta(\phi_i)}{e^{-S(\phi_i)}} \equiv \hat{Z}_p^{-1} \,, && \phi_i \sim p\,.
\end{align}
Combining these results with \cref{eq:freeEn}, one immediately derives corresponding estimators for the free energy
\begin{align}
    \hat{F}_q = - T \log (\hat{Z}_{q}) \label{eq:estq} && \approx && F_q = -T \log (Z_q) \,,\\
    \hat{F}_p = T \log (\hat{Z}_p^{-1}) && \approx && F_p = T \log(Z_p^{-1}) \,. \label{eq:estp} 
\end{align}
Both the p-estimator and the q-estimator can be shown to be asymptotically consistent under the assumption that the supports of the flow $q_\theta$ and the target density $p$ match, i.e. $\textrm{supp}(q_\theta) = \textrm{supp}(p)$ \cite{nicoli2020asymptotically}.
By construction, the learned density, $q_\theta$, has full support over the entire domain of the base distribution -- at least from a purely theoretical point of view. This implies $\textrm{supp}(q_\theta) = \textrm{supp}(p)$ always holds in theory. However, in practice it is not unlikely to have regions of the domain where the density $q_\theta(\phi)$ is vanishingly small. Hence, for a finite number of samples, it can effectively be zero. This leads to incorrect estimation of expectation values of physical observables for any reasonable number of samples $N$.
Furthermore, ensuring that a normalizing flow is invertible also in practice, i.e. to numerical precision, can be very challenging~\cite{pmlr-v130-behrmann21a}.

To analyze the resulting implications for the estimation process, it is useful to define the following generalized notion of the support of the variational density
\begin{definition}
The \textit{effective support of the variational density $q_\theta$ relative to $p$} is given by
\begin{align}\label{eq:eff_rel_supp}
    \widetilde{\mathrm{supp}}_{p,\epsilon}(q_\theta)=\{\phi\in\mathrm{supp}(q_\theta);\ q_\theta(\phi) > \epsilon p(\phi) \} \,,
\end{align}
for a given numerical threshold $\epsilon$. 
The mode dropping set is then given by 
\begin{align}
\mathcal{S} := \mathrm{supp}(p) \setminus \widetilde{\mathrm{supp}}_{p,\epsilon}(q_\theta) \,.
\end{align}
\end{definition}
This definition is useful for the following reason: if the flow is effectively mode dropping, i.e., the mode-dropping set $\mathcal{S}$ is non-empty, the importance weighted estimator, with a finite number of samples $N$, will miss a contribution from the mass $\int_{\mathcal{S}}p(\phi)d\phi$ with approximately the probability $ 1 - \epsilon N\int_{\mathcal{S}}p(\phi)d\phi$.
We note that defining the threshold for the effective support \textit{relative} to the target distribution is pivotal. This is because the absolute definition $q(\phi)<\epsilon$ would have no meaning with regard to mode-dropping. As an example, suppose we have an area of size $O(1/\epsilon)$ and $p=q\leq\epsilon$ in that area. Then the corresponding area with $O(1)$ probability mass would be considered ``$\epsilon$-mode-dropped'' even though $q$ is an exact copy of $p$. We therefore choose a definition for which mode-dropping only exists when $q$ approximately vanishes \textit{relative} to $p$.

It is also useful to define the \textit{effective sampler distribution}
\begin{equation}\label{eq:eff_sampler_dist}
    \widetilde{q}_\theta (\phi) =
    \begin{dcases}
        q_\theta(\phi) / \zeta & \mbox{if } \phi \in \widetilde{\mathrm{supp}}_{p,\epsilon}(q_\theta),\\
        0 & \mathrm{otherwise}, \\
    \end{dcases}
\end{equation}
where $\zeta = \int_{\widetilde{supp}_{p,\epsilon}} \mathcal{D} [\phi] q_\theta(\phi) \leq 1$ represents the multiplicative renormalization factor necessary to guarantee the normalization of $\widetilde{q}_\theta$, i.e., the probability mass out of the effective support $\widetilde{\mathrm{supp}}_{p,\epsilon}(q_\theta)$ is redistributed to the effective support proportionally to the original density $q_\theta(\phi)$. 
With this definition, we express the practical situation where the importance weighted estimator
for a physical observable $\mathcal{O}$ \emph{typically} misses the contribution from the mode-dropping set $\mathcal{S}$, as the assumption that the following approximation holds:
\begin{align}
\hat{\mathcal{O}}
&\equiv  \frac{1}{N}\sum_{i=1}^N \frac{p(\phi_i)} {{q}_\theta(\phi_i)} \mathcal{O}(\phi_i)
\approx
  \estimateE{\phi\sim\widetilde{q}_\theta}{\frac{p(\phi)}{{q}_\theta(\phi)} \mathcal{O}(\phi)} 
\equiv
   \bar{\mathcal{O}},
    \label{eq:ModeDroppingAssumption}
\end{align}
where $ \phi_i\sim {q}_\theta $, for the sample size large enough for Monte Carlo sampling but not too large to assume that $\zeta^N \approx 1$, i.e., the probability that all $N$ samples drawn from $q_{\theta}$ lie within the effective support is close to one.
Since $q_\theta$ has the full support,  it holds that ${\mathrm{supp}}(\widetilde{q}_\theta) = \widetilde{\mathrm{supp}}_{p,\epsilon}(q_\theta)$.
Throughout the manuscript, we will indicate by a hat a (finite sample) estimator, by a bar the expectation over the effective distribution $\widetilde{q}$ -- which corresponds to the average over \emph{typical} samples -- and by an asterisk the expectation over the original distribution $q$.
 Note that, under our assumption of mode-dropping~\eqref{eq:ModeDroppingAssumption}, i.e., ${\mathrm{supp}}(\widetilde{q}_\theta) \nsupseteq {\mathrm{supp}}(q_{\theta}) = {\mathrm{supp}}(p)$, 
 the typical values of the estimator $\hat{\mathcal{O}} \approx \bar{\mathcal{O}}$ can be significantly different from the true expectation value
 \begin{align}
    \mathcal{O}^* &=
    \estimateE{\phi\sim p}{ \mathcal{O}(\phi)} 
    \label{eq:TrueExpectationObservable}\\
    &=
    \estimateE{\phi\sim {q}_\theta}{\frac{p(\phi)}{{q}_\theta(\phi)} \mathcal{O}(\phi)} 
    \notag\\
    &= \lim_{N \to \infty} \frac{1}{N}\sum_{i=1}^N \frac{p(\phi_i)} {{q}_\theta(\phi_i)} \mathcal{O}(\phi_i),
    && \mbox{ where } && \phi_i\sim {q}_\theta.
\notag
 \end{align}
 \par

A more detailed discussion of the effective relative support is provided in \cref{app:effectivesupp}.

We also remark that $\widetilde{\textrm{supp}}_{q_\theta,\epsilon}(p)$ is also interesting to consider, as $\textrm{supp}(q_\theta)\setminus\widetilde{\textrm{supp}}_{q_\theta,\epsilon}(p)\ne\emptyset$ implies effective \textit{``fake'' modes} that are present in $q_\theta$ but not in $p$.

The following theorem holds:
\begin{theorem}\label{th:free_en}
Suppose the trained model is mode dropping, i.e., 
the approximation~\eqref{eq:ModeDroppingAssumption} holds.
Then the $q$-estimator $\hat{F}_q$ and the $p$-estimator $\hat{F}_p$ for the free energy approximate $\bar{F}_{q}$ and $\bar{F}_{p}$, respectively, being bounds on the true free energy as
\begin{align*}
\bar{F}_{q} \geq F \geq \bar{F}_p \,.
\end{align*}
Furthermore, if $\widetilde{\mathrm{supp}}_{p,\epsilon}(q_\theta) \supseteq \mathrm{supp}(p)$ it follows
\begin{align*}
\bar{F}_{q} = F\,,
\end{align*}
and similarly if $\widetilde{\mathrm{supp}}_{q_\theta,\epsilon}(p) \supseteq \mathrm{supp}(q_\theta)$
\begin{align*}
\bar{F}_{p} = F.\,
\end{align*}
\end{theorem}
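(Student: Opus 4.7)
The plan is to turn each estimator into an explicit integral over the corresponding effective support, compare that integral against $Z$, and then apply the monotonicity of $F = -T\log Z$ (with $T>0$) to translate the partition-function bounds into free-energy bounds. Under assumption~\eqref{eq:ModeDroppingAssumption}, specializing the observable $\mathcal{O}(\phi) = e^{-S(\phi)}/q_\theta(\phi)$ gives $\hat{Z}_q \approx \bar{Z}_q := \mathbb{E}_{\phi \sim \widetilde{q}_\theta}\left[e^{-S(\phi)}/q_\theta(\phi)\right]$. For the $p$-estimator I would invoke the natural symmetric analogue by replacing $p$ with its restriction $\widetilde{p}(\phi) = p(\phi)/\zeta_p$ to $\widetilde{\mathrm{supp}}_{q_\theta,\epsilon}(p)$, where $\zeta_p$ is the $p$-mass of that set, yielding $\hat{Z}_p^{-1} \approx \bar{Z}_p^{-1} := \mathbb{E}_{\phi \sim \widetilde{p}}\left[q_\theta(\phi)/e^{-S(\phi)}\right]$.

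For the $q$-estimator, substituting $\widetilde{q}_\theta = q_\theta/\zeta$ on $\widetilde{\mathrm{supp}}_{p,\epsilon}(q_\theta)$ gives
\[
\bar{Z}_q \;=\; \frac{1}{\zeta}\int_{\widetilde{\mathrm{supp}}_{p,\epsilon}(q_\theta)} e^{-S(\phi)}\,\mathcal{D}[\phi] \;=\; \frac{Z(1-\alpha)}{\zeta}\,,
\]
where $\alpha := \int_{\mathcal{S}} p(\phi)\,\mathcal{D}[\phi]$ is the $p$-mass of the mode-dropping set. Since $q_\theta \le \epsilon p$ on $\mathcal{S}$, one obtains $1-\zeta = \int_{\mathcal{S}} q_\theta \le \epsilon\alpha \le \alpha$ (in the natural regime $\epsilon \le 1$), so $\zeta \ge 1-\alpha$ and therefore $\bar{Z}_q \le Z$. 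Monotonicity of $-T\log$ then yields $\bar{F}_q \ge F$, with equality when $\widetilde{\mathrm{supp}}_{p,\epsilon}(q_\theta) \supseteq \mathrm{supp}(p)$ forces $\mathcal{S} = \emptyset$ and hence $\alpha = 0$, $\zeta = 1$.

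The $p$-estimator is treated by the symmetric argument. Writing $\eta$ for the $q_\theta$-mass of $\widetilde{\mathrm{supp}}_{q_\theta,\epsilon}(p)$, one computes $\bar{Z}_p^{-1} = \eta/(Z\zeta_p)$ using $p = e^{-S}/Z$. On the complementary set $p \le \epsilon q_\theta$, so $1-\zeta_p = \int p \le \epsilon(1-\eta) \le 1-\eta$, i.e.\ $\zeta_p \ge \eta$. Hence $\bar{Z}_p^{-1} \le 1/Z$, and since $\bar{F}_p = T\log\bar{Z}_p^{-1}$, this yields $\bar{F}_p \le F$. When $\widetilde{\mathrm{supp}}_{q_\theta,\epsilon}(p) \supseteq \mathrm{supp}(q_\theta)$ the complementary set carries no $q_\theta$-mass, giving $\eta = 1$ and $\zeta_p = 1$, so $\bar{F}_p = F$.

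The main obstacle is conceptual rather than technical: the mode-dropping hypothesis~\eqref{eq:ModeDroppingAssumption} is stated only for samples drawn from $q_\theta$, so one must formulate and justify the natural symmetric version for $p$-samples in which $p$ is effectively restricted to $\widetilde{\mathrm{supp}}_{q_\theta,\epsilon}(p)$, i.e.\ to the region where $p$ is not dwarfed by any \emph{fake modes} of $q_\theta$ (already foreshadowed in the text preceding the theorem). Once this symmetric setup is in place, both inequalities reduce to a single elementary lemma: on the $\epsilon$-mode-dropping set of one distribution, its mass is at most $\epsilon$ times the other's, which is exactly what is needed to control the renormalization factors $\zeta$ and $\zeta_p$.
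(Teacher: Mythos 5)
Your proof is correct and follows the same overall route as the paper's: both convert the free-energy inequalities into partition-function inequalities via the monotonicity of $-T\log$, and both evaluate $\bar{Z}_q$ and $\bar{Z}_p^{-1}$ as integrals of $e^{-S}$ (respectively $q_\theta$) over the relevant effective support, then compare against $Z$ (respectively $Z^{-1}$) by enlarging the integration domain. The one substantive difference is your treatment of the renormalization constants. The paper's proof writes $\estimateE{\widetilde{q}_\theta}{e^{-S(\phi)}/q_\theta(\phi)}=\int_{\widetilde{\mathrm{supp}}_{p,\epsilon}(q_\theta)}e^{-S(\phi)}\,\Dphi$, i.e.\ it silently drops the factor $1/\zeta$ coming from the normalization of $\widetilde{q}_\theta$ in \eqref{eq:eff_sampler_dist} (and the analogous factor for $\widetilde{p}$); since $\zeta\le 1$ this factor pushes in the \emph{wrong} direction, so the paper's inequality as written holds only up to the implicit approximation $\zeta\approx 1$. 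You retain the factor and close the gap with the elementary lemma that on the dropped set $q_\theta\le\epsilon p$, hence $1-\zeta=\int_{\mathcal S}q_\theta\,\Dphi\le\epsilon\alpha\le\alpha$ and $\bar{Z}_q=Z(1-\alpha)/\zeta\le Z$ exactly, with the symmetric estimate $\zeta_p\ge\eta$ for the $p$-estimator. This is the only place in the argument where the $\epsilon$-\emph{relative} definition of the effective support is actually used, and it turns the bound from approximate into unconditional. Your remark that the restriction of $p$ to $\widetilde{\mathrm{supp}}_{q_\theta,\epsilon}(p)$ must be postulated as a separate, symmetric assumption is also accurate: the paper's proof manipulates $\widetilde{p}$ and $\widetilde{\mathrm{supp}}(p)$ without having formally introduced them. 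The only caveat is that your control of $\zeta$ uses $\mathrm{supp}(q_\theta)=\mathrm{supp}(p)$, which the paper does assume, so nothing is lost.
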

We prove in \cref{app:theorem1_proof} that the estimators serve as upper and lower bounds of the free energy.\par
In the presence of mode-collapse, the flow has smaller effective support than the target, i.e., 
\begin{align*}
    \textrm{supp}(p) \nsubseteq {\textrm{supp}}(\widetilde{q}_\theta) \,.
\end{align*}
Crucially, this may also happen when the variational density $q_\theta$ is a very \textit{bad} approximation of the true density $p$, see bottom left of \cref{fig:fwVsrev}. While this is strictly not a common manifestation of mode collapse, the following discussion holds for such badly-trained models where the overlap between the support of $q_\theta$ and $p$ is very small.
In this case, the q-estimator in \cref{eq:estq} may thus lead to (possibly strongly) biased results 
since it may not have full effective support under the assumption that the approximation \eqref{eq:ModeDroppingAssumption} holds.
On the other hand, the  q-estimator has the advantage that it is typically more efficient to sample directly from the flow while the  p-estimator requires the (possibly costly) generation of configurations by a Markov Chain. 
Nevertheless, it is advisable to estimate the free energy with both estimators if there is a risk of mode mismatch and ensure that both lead to consistent results.
The phenomenon of mode-collapse is a widely known issue in the field of density estimation~\cite{goodfellow2016deepchap3, agrawal2020advances, dhaka2021challenges,Dhaka2021ChallengesAO}. In particular, when deploying generative models for physical systems, this becomes crucial as neglecting subsets of the modes of a target density would inevitably lead to highly biased estimation of physical quantities. Moreover, this may sometimes not even be detected unless appropriate estimators are used~\cite{nicoli2021machine}. We want to stress that this problem is not restricted to lattice field theories \citep{hackett2021flow, nicoli2021machine} but is also found within other contexts, such as molecular systems \citep{noe2019boltzmann,wu2020stochastic,midgley2022flow}. Having an estimator which quantifies the amount of probability mass being missed by a variational ansatz is therefore highly desirable for more reliable and unbiased estimation of physical quantities.

When the trained model neglects some modes of the target density, hence missing full effective support over the target domain, estimates of physical observables may be biased. A desirable property of our framework is to detect such bias by providing reliable bounds on the error when the model is mode-dropping. When $q_\theta$ has full effective support on the domain of $p$ the expected value of the importance weights $w(\phi)=\frac{p(\phi)}{q_\theta(\phi)}$ reads
\begin{align}\label{eq:w}
    w^* = \estimateE{q_\theta}{\frac{p(\phi)}{q_\theta(\phi)}} =& \int_{\textrm{supp}(q_\theta)} q_\theta(\phi) \frac{p(\phi)}{q_\theta(\phi)}\, \mathcal{D}[\phi]\nonumber \\
    =& \int_{\textrm{supp}(q_\theta)} p(\phi)\, \mathcal{D}[\phi]=1\,.
\end{align}
This expectation value thus measures the degree to which the support of the target density $p$ is covered by the sampler $q_\theta$. Statistically, in the limit of infinite measurements, $w^*$ is always equal to one. However, if the sampler is mode-dropping, hence the approximation  \eqref{eq:ModeDroppingAssumption} holds, then the estimator $\bar{w}$ will be in $[0, 1]$ providing us with a natural quantity to measure the sampler's ability to probe the entire support of the target density $p$. 

We will now derive an estimator for this expectation value. To this end, we rewrite the above expression as 
\begin{align} \label{eq:moddropestexp}
    \bar{w} \equiv \frac{1}{Z}\mathbb{E}_{\phi \sim \widetilde{q}_\theta} \left[ \frac{e^{ -S(\phi)}}{{q}_\theta(\phi)} \right] \,
\end{align}
and we note that the expectation value is now taken with respect to $\widetilde{q}_\theta$.
As shown in the last section, the partition function $Z$ can be approximated by the p-estimator \eqref{eq:fwd_est} when samples from the target density are available. 
Thus, under the assumption that the approximation \eqref{eq:ModeDroppingAssumption} holds,
the following Monte-Carlo estimate approximates \cref{eq:moddropestexp}, i.e.,
\begin{align}\label{eq:moddropest}
    \bar{w} \approx& \frac{1}{\hat{Z}_{p}} \left( \frac{1}{N}\sum_{i=1}^N \frac{e^{ -S(\phi_i)}}{{q}_\theta(\phi_i)}\right)\nonumber \\ =& \left(\frac{1}{N} \sum_{j=1}^N \frac{{q}_\theta(\phi_j)}{e^{ -S(\phi_j)}}\right) \left( \frac{1}{N}\sum_{i=1}^N \frac{e^{ -S(\phi_i)}}{{q}_\theta(\phi_i)}\right) \equiv \hat{w}\,,
\end{align}
where $\phi_i \sim {q}_\theta$ and $\phi_j \sim p$ are sampled from the flow and the target density $p$ respectively.

\subsection{Bounding the Bias of Physical Observables}
\label{sec:biasbound}
Following~\cite{nicoli2021estimation}, given a physical observable $\mathcal{O}$, our goal is to compute the importance weighted estimator $\hat{\mathcal{O}}$, defined in the left-hand side of \eqref{eq:ModeDroppingAssumption}, which approximates the expectation value $\bar{\mathcal{O}}$ over the effective sampler distribution \eqref{eq:eff_sampler_dist}.  
This estimator is not necessarily unbiased to the true value \eqref{eq:TrueExpectationObservable}, if the model ${q}_\theta$ is affected by mode-collapse, i.e., $\textrm{supp}(p) \nsubseteq {\textrm{supp}}(\widetilde{q}_\theta)$ for the approximation~\eqref{eq:ModeDroppingAssumption} to hold. 
Similarly, the bias of the estimator evaluated over a finite number of trials should approximate 
\begin{align}
    \vert \bar{\mathcal{O}} - \mathcal{O}^*\vert = \Big\vert \int \left( 1 - 1_{{\textrm{supp}}(\widetilde{q}_\theta)} (\phi) \right) \,\mathcal{O}(\phi)\,p(\phi)\,\Dphi\Big\vert\,,
\end{align} 
i.e. the bias arises due to the insufficient effective support of the sampler.

\begin{figure}[t]
  \centering
  \subfloat[]{\includegraphics[width=0.99\columnwidth]{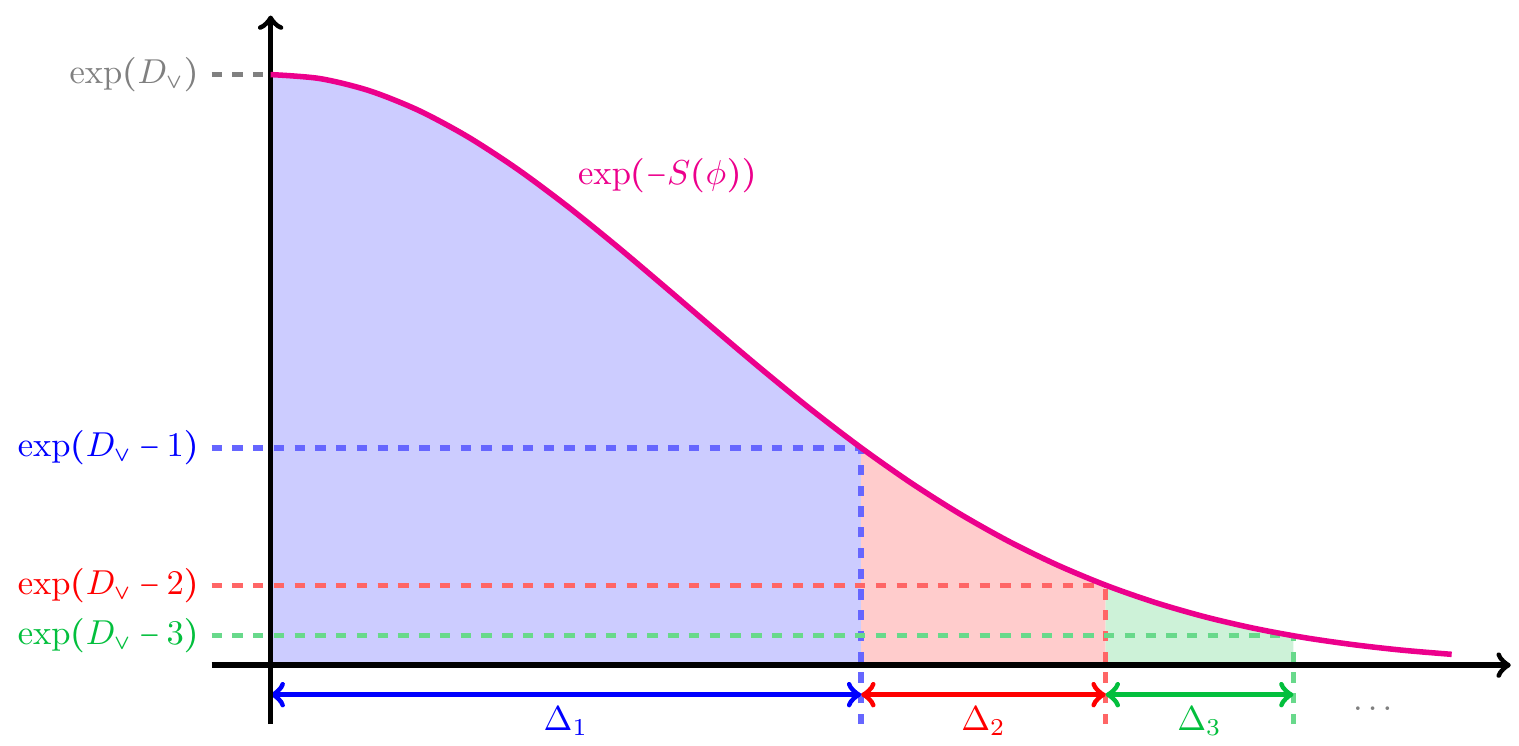} \label{fig:foliation1}} \\
  \subfloat[]{\includegraphics[width=0.99\columnwidth]{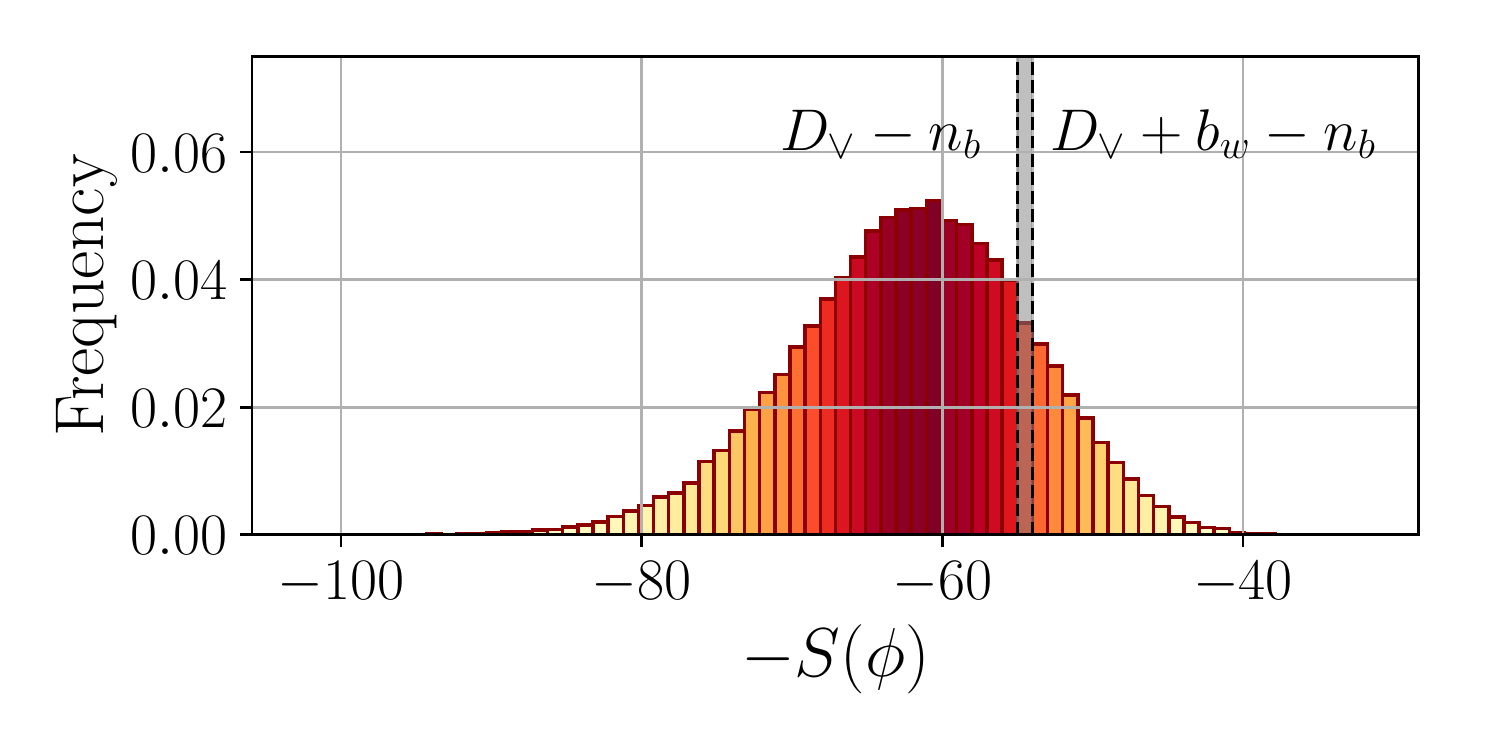} \label{fig:foliation2}}
  \caption{\textbf{Foliation} - (a) visualization of the foliation according to \cref{eq:foliation1} showing $\Delta_1$, $\Delta_2$, and $\Delta_3$ explicitly. (b) Field configurations are distributed into buckets. The labels $n_b$ and $b_w$ thereof represent the bucket index and the width of each bucket, respectively.}\label{fig:foliation}
\end{figure}

\section{The Mode-Dropping Estimator}
\label{sec:mdest}
\begin{figure}
     \centering
     \includegraphics[width=0.99\columnwidth]{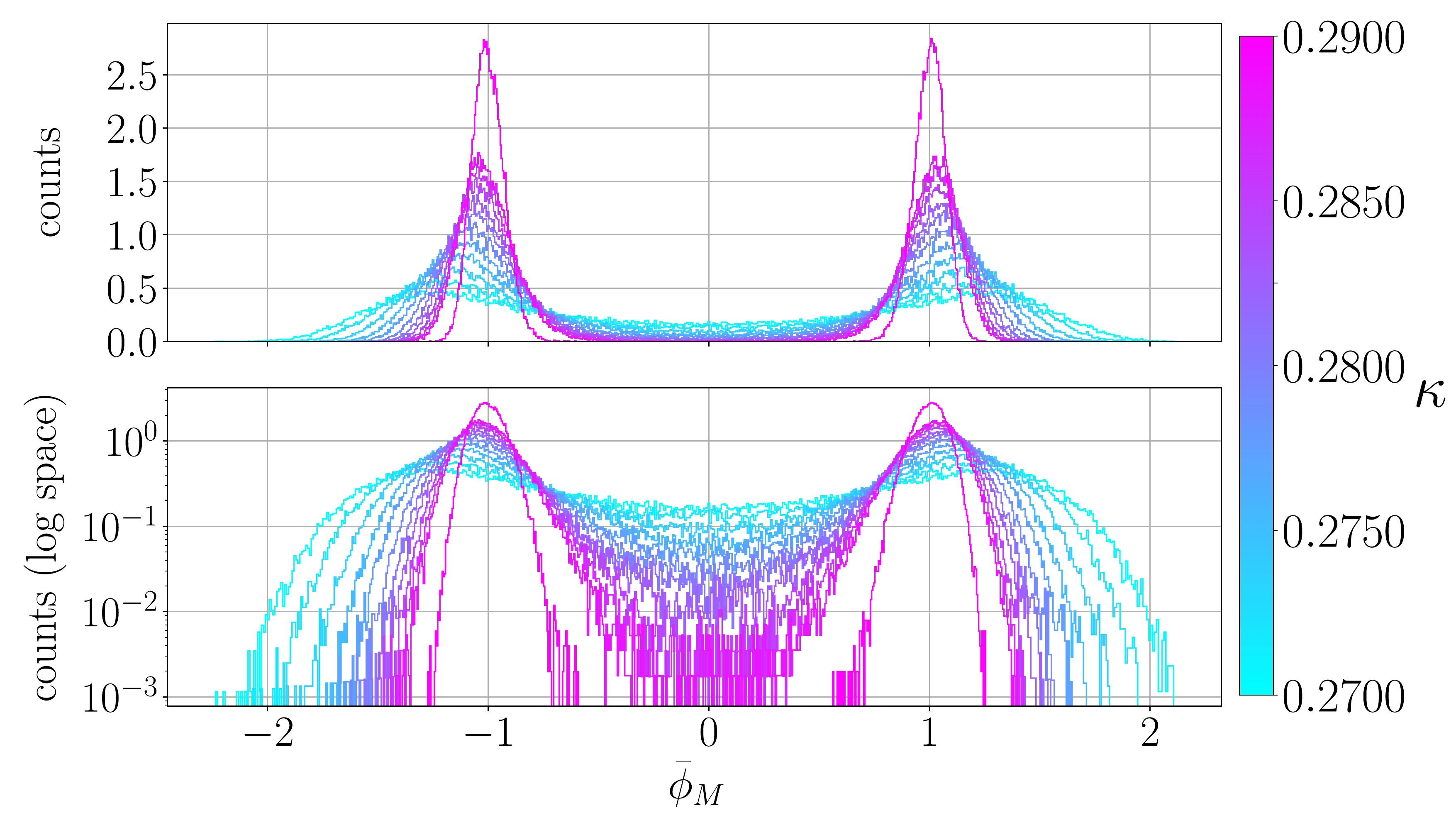}
     \caption{\textbf{Spontaneous symmetry breaking} - spontaneous symmetry breaking for the $\phi^4$ as a function of the hopping parameter using the action in \cref{eq:action}. The top and bottom figures show a histogram of the magnetization -- self-normalized by its absolute value -- in both linear and log-scale respectively. The self-normalization makes the modes centered around +1 and -1. From those histograms we thus observe that the probability between the modes reduces as the hopping parameter $\kappa$ increases.}\label{fig:symmetrybreaking}
\end{figure}
In the following, we aim to derive a bound on this bias. In the process, we will also obtain a natural measure for the degree of mode-collapse. To this end, we foliate the sampling space by disjunct sets
\begin{equation}\label{eq:foliation1}
    \Delta_n := \{\phi \,\, | \,\, e^{D_\vee +1 -n} \geq e^{-S(\phi)} > e^{D_\vee - n}\}
\end{equation}
for $n \in \mathbb{N}$ and $D_\vee \in \mathbb{R}_{\ge0}$. We refer to \cref{fig:foliation} for a visual illustration of the foliation. 
For fixed $n\in \mathbb{N}$, we can define the following weights
\begin{align}
\alpha_n =& \int_{\Delta_n} \left( 1 - 1_{{\textrm{supp}}(\widetilde{q}_\theta)} (\phi) \right) p(\phi) \, \Dphi \nonumber \\
=& \frac{1}{Z}\left(\int_{\Delta_n} e^{-S(\phi)}  \, \Dphi -  \int_{{\Delta_n}\cap{{{\textrm{supp}}(\widetilde{q}_\theta)}}} e^{-S(\phi)}  \, \Dphi \right) \,. \label{eq:alpha_n}
\end{align}
Leveraging these definitions, we derive a bound on the bias in \cref{app:theorem2_proof} which is summarized in the following theorem.
\begin{theorem}\label{th:bias}
Let the action $S$ of the theory and the observable $\mathcal{O}$ be polynomially bounded, i.e.
\begin{align}
    C_\vee \vert\vert \phi \vert\vert^{\alpha_\vee} -D_\vee \leq S(\phi) < C_\wedge \vert\vert \phi \vert\vert^{\alpha_\wedge} + D_\wedge
\end{align}
for some $C_{\vee, \wedge}, D_{\vee, \wedge}, \alpha_{\vee, \wedge} \in \mathbb{R}_{\ge0}$ and
\begin{align}\label{eq:gen_bound_main}
    \vert\mathcal{O}(\phi)\vert \leq c\vert\vert\phi\vert\vert^\alpha + d
\end{align}
for some $c,\alpha,d \in \mathbb{R}_{\ge0}$.
The bias, i.e., the difference between the expectation value $\bar{\mathcal{O}}$ and the true value $\mathcal{O}^*$, then satisfies
\begin{align}\label{eq:bound_maintext}
    \vert \bar{\mathcal{O}} - \mathcal{O}^*\vert \leq \sum_{n\in \mathbb{N}}  \sup_{\phi\in\Delta_n}\left\vert\mathcal{O}(\phi)\right\vert \cdot \alpha_n\,.
\end{align} 
\end{theorem}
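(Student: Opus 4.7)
The plan is to start from the exact formula for the bias already written just before the theorem,
\begin{equation*}
\lvert \bar{\mathcal{O}} - \mathcal{O}^*\rvert = \Bigl\lvert \int \bigl( 1 - 1_{\mathrm{supp}(\widetilde{q}_\theta)} (\phi) \bigr)\,\mathcal{O}(\phi)\,p(\phi)\,\Dphi\Bigr\rvert,
\end{equation*}
and then exploit the foliation $\{\Delta_n\}_{n\in\mathbb{N}}$ defined in \eqref{eq:foliation1}. Because the sets $\Delta_n$ are disjoint and cover the sampling space (every $\phi$ has some action $S(\phi)$ and hence falls into exactly one bucket of width one in $-S$), the integral decomposes as a countable sum of integrals restricted to each $\Delta_n$.

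The next step is a direct triangle-type inequality: pull the absolute value inside the sum over $n$, and on each $\Delta_n$ replace $|\mathcal{O}(\phi)|$ by its supremum $\sup_{\phi\in\Delta_n}|\mathcal{O}(\phi)|$, which is a constant that can be factored out of the inner integral. What remains inside is exactly
\begin{equation*}
\int_{\Delta_n}\bigl( 1 - 1_{\mathrm{supp}(\widetilde{q}_\theta)} (\phi) \bigr)\,p(\phi)\,\Dphi = \alpha_n,
\end{equation*}
where $\alpha_n\ge 0$ since the integrand is non-negative; this identifies the right-hand side of \eqref{eq:bound_maintext}. Rewriting $\alpha_n$ via $p = e^{-S}/Z$ matches the form in \eqref{eq:alpha_n} and completes the bound.

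Finally, I would use the polynomial-boundedness hypotheses to justify that the resulting series is meaningful. On $\Delta_n$ one has $n-D_\vee-1\le S(\phi)<n-D_\vee$, so the lower bound $C_\vee\|\phi\|^{\alpha_\vee}-D_\vee\le S(\phi)$ implies $\|\phi\|\le(n/C_\vee)^{1/\alpha_\vee}$ on $\Delta_n$, and hence by \eqref{eq:gen_bound_main} the factor $\sup_{\phi\in\Delta_n}|\mathcal{O}(\phi)|$ grows at most polynomially in $n$. Meanwhile $\alpha_n$ is controlled by $e^{-(n-D_\vee-1)}$ times a local normalization coming from $1/Z$ and the volume of the bucket (itself polynomial in $n$ via the same norm bound), so the exponential decay in $n$ dominates and the sum in \eqref{eq:bound_maintext} is finite.

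The main obstacle, and the only non-bookkeeping step, is precisely this last point: ensuring that the foliation buckets $\Delta_n$ are small enough in $p$-mass that the series converges after pulling out $\sup_{\phi\in\Delta_n}|\mathcal{O}(\phi)|$. The polynomial-growth assumptions on $S$ and $\mathcal{O}$ are there exactly to deliver this, by trading the exponential factor $e^{-n}$ hidden in $\alpha_n$ against the polynomial blow-up of $|\mathcal{O}|$ over $\Delta_n$. Everything else is a disjoint-partition decomposition plus a triangle inequality applied to a non-negative measure.
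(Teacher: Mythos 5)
Your proposal is correct and follows essentially the same route as the paper's own proof in the appendix: the identical decomposition of the bias integral over the disjoint buckets $\Delta_n$, the triangle inequality plus factoring out $\sup_{\phi\in\Delta_n}|\mathcal{O}(\phi)|$ to identify $\alpha_n$, and the same convergence argument trading the polynomial growth of $\sup_{\Delta_n}|\mathcal{O}|$ (via $\|\phi\|<(n/C_\vee)^{1/\alpha_\vee}$) against the exponential decay $\alpha_n\le Z^{-1}\,\mathrm{vol}(\Delta_n)\,e^{D_\vee+1-n}$. No gaps to report.
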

The bias is therefore bounded by a weighted sum over the $\alpha_n$. The weighting of each summand depends on the observable $\mathcal{O}$ of interest.
We note that the present discussion is only relevant for non-compact variables. Indeed, continuous functions living on \textit{compact} manifolds $\mathcal{M}$ are integrable, and therefore no additional care is required to handle indefinite forms of the type $0\cdot \infty$. In particular, it follows that for a compact variable $\mathcal{O}$ the bound is straightforward
\begin{align}\label{eq:bound_maintext_compact}
    \vert \bar{\mathcal{O}} - \mathcal{O}^*\vert \leq   \sup_{\phi \in \mathcal{M}}\left\vert\mathcal{O}(\phi)\right\vert \cdot \bar{w}\,.
\end{align}\par 
We note that many physical observables are simple powers of fields, i.e. $\mathcal{O}(\phi) = ||\phi||^k$ for $k \in \mathbb{N}$.
It can be shown that the foliation \eqref{eq:foliation1} along with the polynomial bound of the action $S$ implies that
\begin{align}
    \vert\vert\phi\vert\vert < u_n \,,
\end{align}
where we have defined $u_n = \left(\frac{n}{C_\vee}\right)^{\frac{1}{\alpha_\vee}}$. We refer to \cref{app:boundconfig} for more details. For such observables, the bias can thus be bounded by 
\begin{align}
    \vert \bar{\mathcal{O}} - \mathcal{O}^*\vert \leq \sum_{n\in \mathbb{N}}  u_n^k \,  \alpha_n\,.
\end{align} 
The theorem also naturally relates to the quantity $\bar{w}$ introduced in the last section which quantifies the degree of mode-collapse. In order to provide a single number for the degree of the mode-collapse of the sampler, it is natural to choose a uniform, i.e. observable agnostic, weighing. It then follows from the definition of the $\alpha_n$, see \eqref{eq:alpha_n}, that this weighting measures the mismatch in support between the sampler and the target density
\begin{align}\label{eq:sum_alphan}
    \sum_{n\in \mathbb{N}} \alpha_n = \int \left( 1 - 1_{{\textrm{supp}}(\widetilde{q}_\theta)} (\phi) \right) p(\phi) \, \Dphi =   1 - \bar{w}\,
\end{align}
and is thus directly related to the mode-dropping estimator $\bar{w}$ derived in the last section.
\section{Numerical Experiments}
\label{sec:results}
\begin{figure*}[t]
     \centering
     \includegraphics[width=0.90\textwidth]{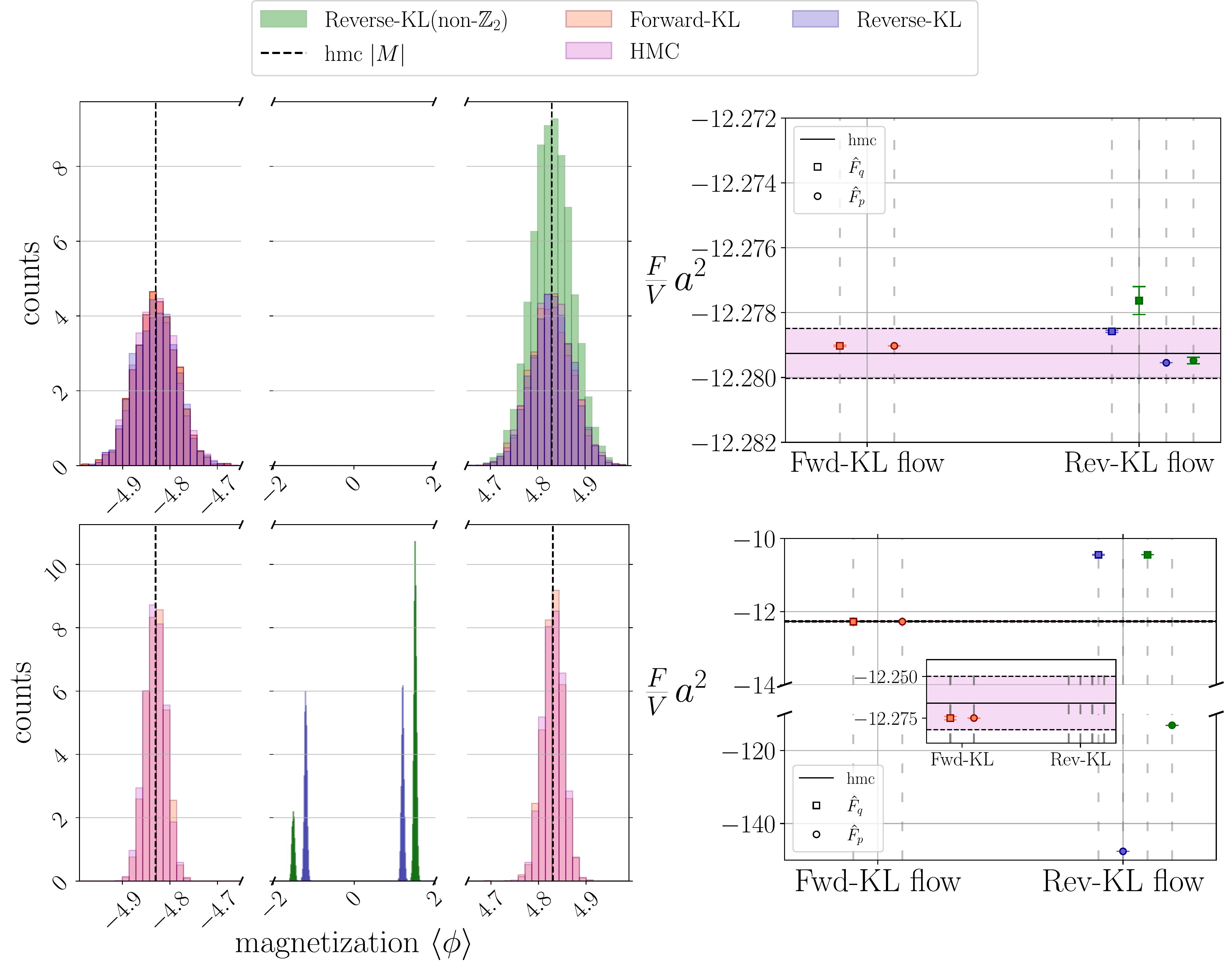}
     \caption{\textbf{Histogram of magnetization (left) and free energy estimates (right) for flow-based models trained with forward- and reverse {KL} at $\kappa=0.5$ for $\Lambda=16\times 8$ (top) and $\Lambda=64\times 8$ (bottom)} - The left-hand side shows histograms of the magnetization for configurations sampled from a forward-KL trained normalizing flow (orange), a $\mathbb{Z}_2$ equivariant reverse-KL flow (purple) a simple non-$\mathbb{Z}_2$-equivariant reverse-KL flow (green) and an overrelaxed HMC (pink). The same choice of colors applies to the right hand side plots. Models were used to sample configurations at $\kappa=0.5$, i.e., well in the broken phase. Dashed black lines are the expected values for the absolute magnetization (and its negative value) using {HMC}. The right-hand side plot shows estimates of free energy densities using the estimators introduced in \cref{sec:modecollapse} for both flow models. Their estimates are computed and then compared to the reference HMC baseline (denoted by the solid black line while the pink surroundings show the standard deviation) and following the method discussed in \cite{nicoli2021estimation} and \cref{app:technical_details}. The q- and the p-estimators for the free energy are shown using different markers. Error bars -- based on the standard deviation -- for the flow-based estimate of the free energy density are often not visible as they tend to be several orders of magnitudes smaller than the plot scale. For the $\Lambda=64\times 8$ case the mismatch between the reverse-KL flows and the {HMC} histograms, shown in the left-hand side plot, is reflected in highly biased estimates (dark blue markers).}\label{fig:fwVsrev}
\end{figure*}
We evaluate our proposed methods to detect and mitigate mode-collapse using the two-dimensional scalar $\phi^4$-theory with action
\begin{align}\label{eq:action}
S[\phi] = \sum_{x \in \Lambda} - 2 \kappa \sum_{\hat{\mu}=1}^2 \phi(x) \phi(x + \hat{\mu}) + (1 - 2 \lambda)& \phi(x)^2 \nonumber \\
+ &\lambda \, \phi(x)^4 \,.
\end{align}
where  $\lambda$ is the bare coupling while $\kappa$ is the hopping parameter. We refer to \cite{nicoli2021estimation} for more details on this hopping parameterization of the action. Throughout all our experiments we keep the bare coupling fixed at $\lambda=0.022$ and vary $\kappa$ such that the theory crosses the phase transition due to the spontaneous breaking of its $\mathbb{Z}_2$ symmetry, i.e. $\phi \to - \phi$. As the hopping parameter $\kappa$ increases, spontaneous magnetization is observed. This is illustrated in \cref{fig:symmetrybreaking} for which the hopping parameter takes values through the critical region around $\kappa_c\approx0.275$. The curves show the density (top) and log density (bottom) of the normalized magnetization with different colors referring to different values of the hopping parameter~$\kappa$. Spontaneous symmetry breaking is observed as the distribution of the magnetization changes from a wider single-mode to a bi-modal density with a suppressed tunneling probability between the two modes. This suppression is accentuated as the value of $\kappa$ increases.\par

\subsection{Free Energy Estimators}

Our first numerical experiment analyzes the performance of two normalizing flows trained with both objectives described in \cref{sec:fwdVSrev}. We refer to those flows as the forward-KL flow and the reverse-KL flow if they were trained with maximum likelihood or self-sampling respectively. We train for a hopping parameter $\kappa=0.5$ such that the theory is in its broken phase, see \cref{fig:fwVsrev}. For maximum likelihood training, we use 50M samples generated by an overrelaxed HMC. Following \cite{nicoli2021estimation}, we choose an architecture for the (reverse-KL) normalizing flows such that those models are manifestly invariant under $\mathbb{Z}_2$ symmetry (blue). In order to highlight the effects of mode-collapse, we also train reverse-KL flows without the $\mathbb{Z}_2$ inductive bias (green) thus expecting these models to be prone to mode-collapse.

The reference estimates for the true free energy were obtained via HMC simulations. Similarly to the approach followed in \cite{nicoli2021estimation} such estimates are obtained by discretizing the hopping parameter space so that free energy differences can be estimated via HMC along the trajectory. Those contributions are added, integrating such trajectory up to the desired point at which the free energy needs to be estimated. Further technical details on how deep generative models were trained and HMC reference values obtained, can be found in \cref{app:technical_details}.

As can be seen on the left-hand side of \cref{fig:fwVsrev}, the forward-KL flow (orange) very closely reproduces the reference distribution by HMC (pink). For the reverse-KL trained flows, we see that for smaller systems (top row in \cref{fig:fwVsrev}), leveraging the $\mathbb{Z}_2$ inductive bias leads to a good approximation (blue) while the non-$\mathbb{Z}_2$ equivariant flow (green) fails to capture both modes. For larger systems, instead, both $\mathbb{Z}_2$ equivariant, and non-equivariant, flows are not able to capture most of the support of the target density $p$, thereby resulting in poor approximations.

 On the right-hand side of \cref{fig:fwVsrev}, we estimate the free energy density of the system using different flows. We use the same color scheme as on the left-hand side and measure the free energy using both the \textit{p-estimator} (circle) \cref{eq:estp} and the \textit{q-estimator} (square) \cref{eq:estq} of the free energy. Our numerical results indeed agree with the theoretical prediction of \cref{th:free_en}.  
 Specifically,  with the model trained with maximum likelihood, both estimators lead to compatible predictions with the HMC estimator. This is consistent with the left-hand side of the plot which suggests that no mode-collapse took place for this model. 

For the reverse-KL flow, however, such an agreement may not be expected as the left-hand side of \cref{fig:fwVsrev} shows a mismatch in the support. 
The right-hand side plots show that the non-$\mathbb{Z}_2$ equivariant flow (green) in the $\Lambda=16\times 8$ (top row) case is dropping the left-hand mode while its $\mathbb{Z}_2$-equivariant counterpart (blue) covers both modes. Nonetheless, as the dimensionality increases the density estimation task becomes increasingly more challenging thus preventing the $\mathbb{Z}_2$-equivariant reverse-KL flows to train effectively for $\Lambda=64\times 8$.
As a result, we find that the q-estimator overestimates the true value F for both lattice sizes (top and bottom rows), while for the $\Lambda=64\times 8$ case the p-estimator substantially underestimates F, i.e. $\hat{F}_{q_\theta} \geq F \geq \hat{F}_p $, as predicted by \cref{th:free_en}. This latter situation suggests that the reverse-KL flows (green and blue) are limited in approximating of the target density resulting in very different effective supports. While strictly speaking this is not usually referred to as mode-collapse, it can be understood through the same lenses.
 
 These experiments thus illustrate that: a) a mode-covering objective such as the forward-KL is more resilient when the target density is multimodal and shows sparse effective support and b) when the effective support does not match, both p- and q-estimators of the free energy from \cref{sec:modecollapse} give upper and lower bound respectively. Moreover, we note that training using a forward KL objective does not worsen the performance compared to using a reverse KL. Practically, if the variational distribution presents some \textit{``fake'' modes}, $\phi$ s.t. 
 $q_\theta(\phi) / p(\phi) \gg 1$, 
 field configurations sampled in these regions will always be exponentially suppressed in the reweighting phase. We emphasize once again that a significant drawback of training using the pure form of forward KL is the necessity for training samples. Although this limitation applies in general, it does not pose a problem for our specific task of estimating thermodynamic observables.\par
\begin{figure}[t]
     \centering
     \includegraphics[width=0.42\textwidth]{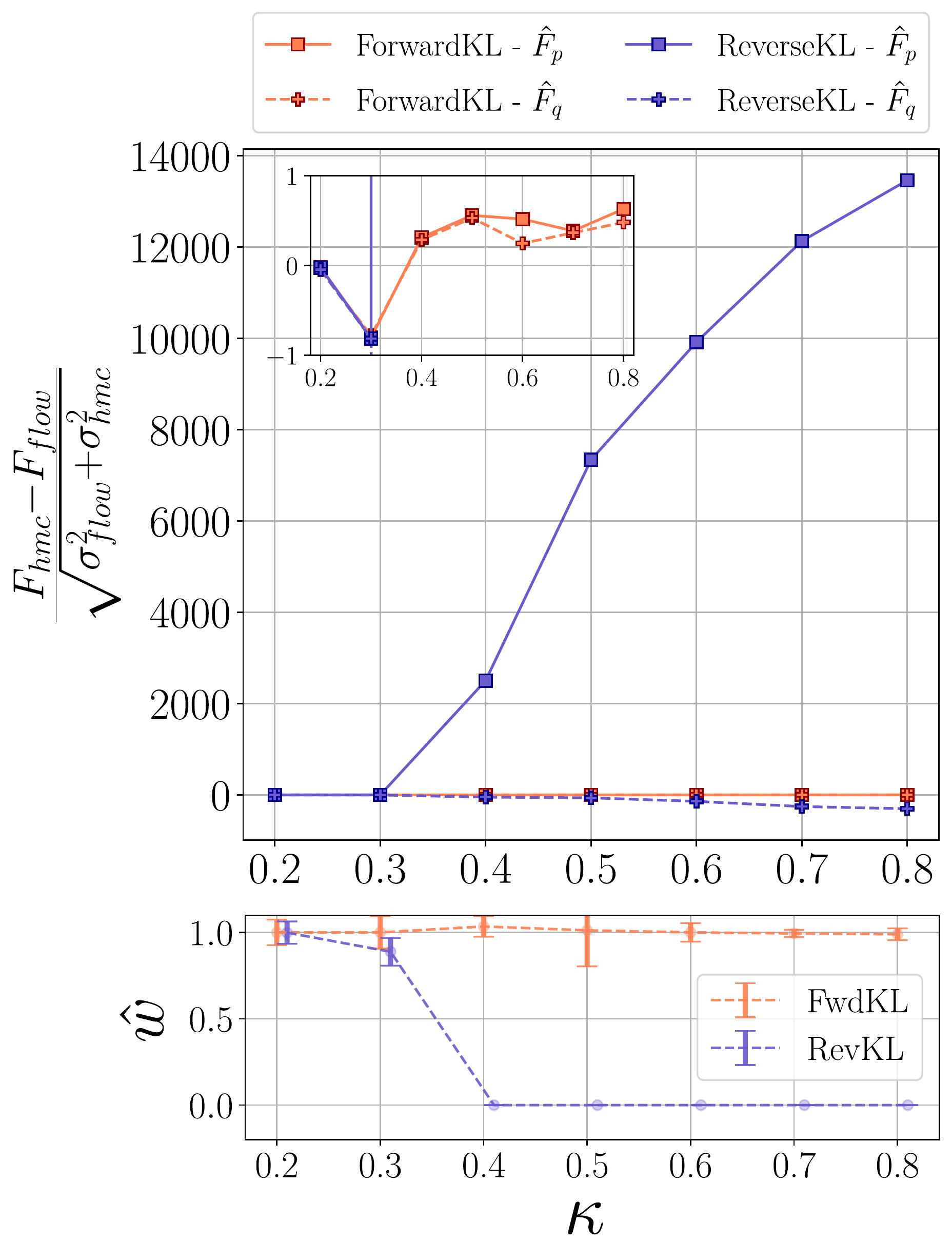}
     \caption{\textbf{Analysis of the free energy estimates for forward and reverse {KL} flows for a $64\times8$ lattice and different hopping parameters $\kappa$} - colors denote flow models trained with different objectives, namely the forward-{KL} (orange) and reverse-{KL} (purple). Markers refer to the two different estimators of the free energy introduced in \cref{eq:estq} and \cref{eq:estp}. Specifically, \textit{plus} and \textit{square} are used respectively. Every point in the inset shows that flow and {HMC} estimates are compatible, i.e., the gap is within the statistical uncertainty. The lower plot relates the results from above to the mode-dropping estimator \cref{eq:moddropest}. This demonstrates that when the flow is a good approximator for $p$, the estimator is close to one. When modes of the distribution are missed, e.g., $q_\theta$ is a bad approximator of $p$, the estimator of $\wbar$ quickly decays to zero.}\label{fig:absolutedev}
\end{figure} 

We repeated this analysis for a number of values of the hopping parameter $\kappa$.
The results are summarized in \cref{fig:absolutedev} for the larger lattice with $\Lambda=64\times 8$. We evaluate the gap between the neural importance sampling (NIS) estimate $\hat F$ and the HMC reference normalized by the total standard error. Namely, if the normalized gap is within the range $[-1,+1]$, both estimators are compatible (see inset in the top plot of \cref{fig:absolutedev}).
Dashed curves connect  q-estimates \eqref{eq:estq}, while solid curves connect p-estimates \eqref{eq:estp}, of the free energy at different values of $\kappa$. The results obtained with $\mathbb{Z}_2$-equivariant reverse-KL and forward-KL flows are shown in blue and orange respectively. The inset shows close agreement of both estimators and both flow models for $\kappa\in\{0.2,0.3\}$. However, deep in the broken phase, e.g. $\kappa\geq0.4$, the two modes of the target distribution start to lay further apart resulting in a failure of the mode-seeking objective, i.e. the reverse-KL, to properly capture the target density, see also bottom left plot of \cref{fig:fwVsrev}. As a result, the probability mass transport induced by the normalizing flow fails to reproduce the correct target distribution $p$, leading to a larger gap between the p- and q-estimators. 
When using the forward-KL trained flow instead, the support of the sampler is closely matching the support of the target hence making free energy compatible with the HMC reference even at the higher values of the hopping parameter $\kappa$, when~\cref{eq:ModeDroppingAssumption} holds. This effect is shown in the inset where there is a good agreement between both estimators of the free energy. This observation suggests that the mode-covering nature of the forward-KL is crucial to ensure that the flow leads to unbiased estimates of physical observables.\par
 
In \cref{fig:absolutedev}, it is also shown that our proposed mode-dropping estimator \eqref{eq:moddropest} correlates well with the observed gap in the free energy estimation. Lastly, we use our estimator $\bar{w}$ to evaluate the support-mismatch of forward and reverse flow models trained at several $\kappa$ values and different lattice sizes as shown in \cref{fig:mdest}. The top and bottom plots refer to lattices of size $\Lambda = 16\times 8$ and $\Lambda = 64\times 8$ respectively. These results demonstrate that the quality of the sampler very quickly deteriorates in the broken phase due to mode-collapse for the model trained by self-sampling. This is not the case for models trained with the forward KL. Indeed, as shown in \cref{fig:mdest}, these models scale significantly better in the volume of the system. 
Furthermore, the non-$\mathbb{Z}_2$ equivariant reverse-KL flow (green), is manifestly mode dropping for $\Lambda=16\times 8$, see \cref{fig:fwVsrev}, with values of $\wbar$ around $0.5$ for values $0.4\leq \kappa\leq 0.7$. This agrees with the left-hand side of \cref{fig:fwVsrev} where only half of the support is covered by the learned variational density in the top row. 
\begin{figure}[t!]
     \centering
     \includegraphics[width=0.48\textwidth]{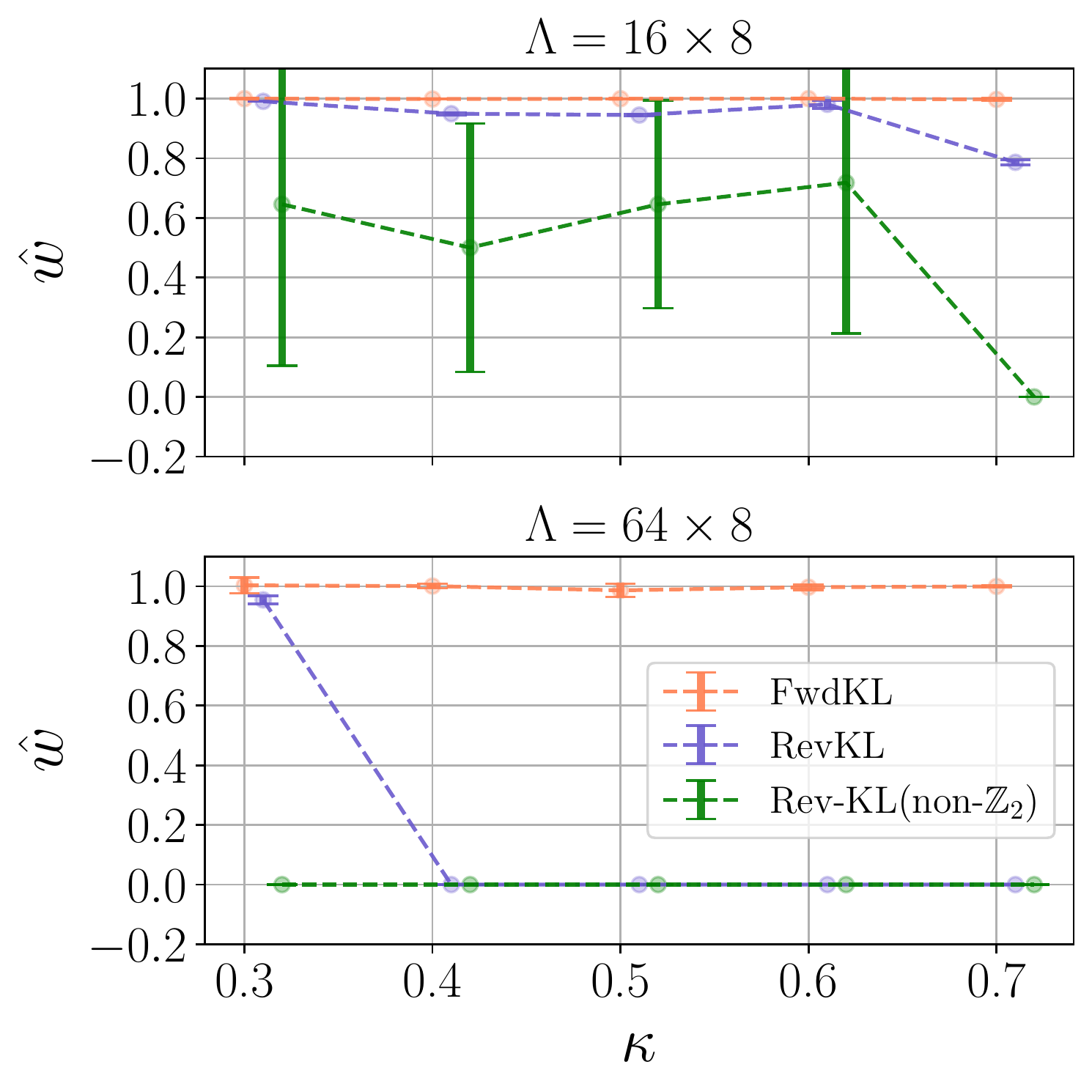}
     \caption{\textbf{Mode dropping evaluation as a function of the hopping parameter $\kappa$ for fixed $\lambda$ and two different lattice sizes} - mode dropping estimator for different values of the hopping parameter and different lattice sizes. For each setup, three normalizing flows were trained with reverse- and forward-KL objectives as described in \cref{sec:fwdVSrev}. The former models are again trained with (orange) and without (green) built-in $\mathbb{Z}_2$ equivariance. The upper and lower plots show the mode-dropping estimator \cref{eq:moddropest} for lattices $\Lambda=16\times 8$ and $\Lambda=64\times 8$, respectively. Unsurprisingly, we observe that mode-collapse gets more severe as the lattice size increases. This is reflected in a stronger decay toward zero of the estimator for larger volumes.}\label{fig:mdest}
\end{figure}
\section{Outlook and Summary}
\label{sec:conclusions}
Mode-collapse presents a significant limitation to flow-based sampling on the lattice because it may lead to inaccurate approximations of the target density, either partially or completely. Intuitively, it can be understood as being in a loose relation to the tunneling problem in local MCMC algorithms. Specifically, the algorithmic challenges in sampling from multi-modal distributions are shifted from the sampling to the training phase for normalizing flows. In this work, we have studied this important limitation of flow-based sampling in great detail. We argue that in the important case of thermodynamic observables, there are practical and theoretically grounded mitigation strategies available. Specifically, the flow can be trained using the forward KL divergence and the free energy can be evaluated with two estimators that bound the true value. Furthermore, we have analyzed mode-mismatch theoretically and derived a bound on its induced bias as well as a quantitative measure for its severity. Normalizing flows are currently only limited to toy models. Encouragingly, we also observed as a side-product of our analysis, that the forward KL objective leads to better scaling in the system size. This observation may be worthwhile to be studied further as part of future work.

\section*{Acknowledgements}
The authors thank the referee for stimulating discussions and useful suggestions that significantly improved the manuscript. K.A.N., C.J.A., S.N., and P.K. are supported by the German Ministry for Education and Research (BMBF) as BIFOLD - Berlin Institute for the Foundations of Learning and Data under the grant BIFOLD23B. K.A.N. has been partially supported by the Einstein Research Unit Quantum (ERU) Project under grant ERU-2020-607. This work is supported with funds from the Ministry of Science, Research, and Culture of the State of Brandenburg within the Centre for Quantum Technologies and Applications (CQTA).
This work is funded by the European Union’s Horizon Europe Framework Program (HORIZON) under the ERA Chair scheme with grant agreement No. 101087126. 
This work is funded by the European Union’s HORIZON MSCA Doctoral Networks programme and the AQTIVATE project (101072344).
The authors acknowledge Lena Funcke and Paolo Stornati for helpful discussions.
\begin{figure}[H]
     \centering
\includegraphics[width=0.35\columnwidth]{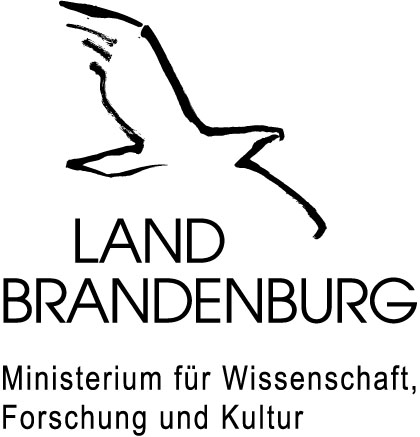}
     \caption*{}
\end{figure}

\bibliography{references}

\clearpage

\appendix
\onecolumngrid
\section*{Appendix}
\subsection{Forward-KL training}
\label{app:forwardtraining}

Training a normalizing flow with forward-KL in the context of lattice field theory requires pre-generated samples at a given point in parameter space. Before training a flow model, one should instantiate a thermalized Markov chain at a fixed value of the coupling parameters and generate a sufficient number of Monte-Carlo configurations which are then used to train the flow. A pseudo-code for this approach is presented in \cref{alg:FKL}. We note that practically this approach may not always be feasible. For example, the number of pre-generated configurations needed for training a flow to an acceptable accuracy increases as the size of the lattice grows. For instance, training a flow for a $64\times 8$ lattice in the context of the $\phi^4$ field theory, in the broken phase, requires already more than fifty million samples. This problem, therefore, limits the practical deployment of forward-KL training schemes at larger scales. Moreover, another limitation of such an approach is that generating samples with HMC may not always be possible. Indeed, in the proximity of a phase transition, long-range autocorrelation will prevent to samples a necessary large amount of uncorrelated samples in time. One would therefore need to be very careful in generating a suitable dataset of HMC configurations to avoid incorporating any additional unwanted bias when training the flow.

\begin{algorithm}[ht]
\SetKwFunction{FSample}{sampleHMC}
\KwIn{
\begin{itemize}[noitemsep,nolistsep]
    \item prior density, e.g., $q_z \sim \normdist (0, \id)$
    \item parametric model with parameters $\theta$
    \item parametric action $S(\phi, \kappa, \lambda)$ with fixed coupling parameters $\lambda$ and $\kappa$
    \item empty tensor for storing a batch of $B$ configurations $\vect{\Phi} \in \realset^{B\times N_S \times N_T}$
\end{itemize}
}
\KwResult{
\begin{itemize}[noitemsep,nolistsep]
    \item learned bijective transformation $f_\theta$ s.t. $\phi_i=f_\theta(z_i)$
    \item exact likelihood function $q_\theta$
\end{itemize}
}
\Begin{
\tcc{Generate training samples from {HMC}}
\For{$c$ in $\{1, \dots, C_{\mathrm{max}}\}$ \label{alg:startsampling}}{
    $\phi$ = \FSample($\lambda$, $\kappa$) \tcp*[l]{sample configurations $\vect{\phi}\in\realset^{n\times N_S \times N_T}$} \label{alg:line1.2}
    $\vect{\Phi}$ = concatenate($\vect{\Phi}$, $\vect{\phi}$) \tcp*[l]{concatenate configurations}
}\label{alg:endsampling}
\tcc{Flow training using the generated dataset $\vect{\Phi}$}
\For{$t$ in $1, \dots, T_{\mathrm{max}}$}{
    \tcc{Iterate over dataset to collect batches of configurations}
    draw samples $\vect{\phi}=\{\phi_i\}_{i=1}^m$ from $\vect{\Phi}$ where $\phi_i\in\realset^{N_S\times N_T}\,\,\forall i\in\{1,\dots,\,m\}$ \label{alg:line3.2}\\
    use $\vect{\phi}$ to evaluate $\frac{1}{m}\sum_{i=1}^m \ln q_\theta(\phi_i)$; \label{alg:line4.2}\\
     $\Delta \theta =  \argminAlg\limits_\theta \estimateE{p}{\ln q_\theta(\phi)} $; \label{alg:line5.2}\\
   update $f_\theta$ with $\theta \leftarrow \theta + \eta \, \Delta \theta$ \label{alg:line6.2};
}
}
\KwRet{$f_\theta$, $q_\theta$}
\caption{To train an NF with forward-{KL} in the context of lattice field theory we need to generate training configurations from a thermalized {HMC} chain. This {HMC} pre-sampling process is made more efficient by running $C_{max}$ independent chains in parallel so that the runtime to sample the entire dataset $\vect{\Phi}$ is constant in the total number of samples $B$. The total number of samples will therefore be $B=n\,C_{max}$. The sampling is done between \cref{alg:startsampling} and \cref{alg:endsampling} in the algorithm below. Once the dataset is sampled and stored on disk, one starts training up to $T_{max}$ iterations. Per iteration, one draws batches of $m$ configurations from $\vect{\Phi}$ in \cref{alg:line3.2} and uses them to evaluate the expectation value of the log-density of $q_\theta$ and compute the gradient of the forward {KL} objective in \cref{alg:line4.2} and \cref{alg:line5.2} respectively. The model weights are then updated and the learned bijection $f_\theta$ along with the variational density $q_\theta$ are returned by the algorithm at the end of its training steps.}\label{alg:FKL}
\end{algorithm}

\subsection{Relative effective support}
\label{app:effectivesupp} 
Let $B(\phi,r)$ be the open ball centered at $\phi$ with radius $r$. A point $\phi$ is called $\epsilon$-dropped if and only if
\begin{align}
    \limsup_{r\to0}\frac{\int_{B(\phi,r)} q_\theta(\phi')d\phi'}{\int_{B(\phi,r)} p(\phi')d\phi'}<\epsilon.
\end{align}
By the Lebesgue differentiation theorem, this implies that $q_\theta(\phi)\le\epsilon p(\phi)$ holds for for almost every $\epsilon$-dropped $\phi$, and if $p$ and $q_\theta$ are continuous, it actually means $q_\theta(\phi)\le\epsilon p(\phi)$. 
We recall the definition of effective relative support \cref{eq:eff_rel_supp} 
\begin{align}
    \widetilde{\mathrm{supp}}_{p,\epsilon}(q_\theta):=\{\phi\in\mathrm{supp}(q_\theta);\ \phi\text{ not $\epsilon$-dropped}\}\,.
\end{align}
Setting $\mathcal{S}:=\textrm{supp}(p)\setminus\widetilde{\mathrm{supp}}_{p,\epsilon}(q_\theta)$ and assuming $\mathcal{S}\ne\emptyset$ means that the importance weighted estimator with $N$ samples lacks a contribution from the mass $\int_{\mathcal{S}}p(\phi)d\phi$ with probability 
\begin{align}
    \left(1-\int_\mathcal{S}q_\theta(\phi)\textrm{d}\phi\right)^N > \left(1-\epsilon\int_{\mathcal{S}}p(\phi)d\phi\right)^N\approx 1 - \epsilon N\int_{\mathcal{S}}p(\phi)d\phi.
\end{align}

\subsection{Proof of \texorpdfstring{\Cref{th:free_en}}{Theorem 1}}
\label{app:theorem1_proof}

\begin{remark}
Suppose the trained model is mode dropping, i.e., 
the approximation~\eqref{eq:ModeDroppingAssumption} holds.
Then the $q$-estimator $\hat{F}_q$ and the $p$-estimator $\hat{F}_p$ for the free energy approximate $\bar{F}_{q}$ and $\bar{F}_{p}$, respectively, being bounds on the true free energy as
\begin{align*}
\bar{F}_{q} \geq F \geq \bar{F}_p \,.
\end{align*}
Furthermore, if $\widetilde{\mathrm{supp}}_{p,\epsilon}(q_\theta) \supseteq \mathrm{supp}(p)$ it follows
\begin{align*}
\bar{F}_{q} = F\,,
\end{align*}
and similarly if $\widetilde{\mathrm{supp}}_{q_\theta,\epsilon}(p) \supseteq \mathrm{supp}(q_\theta)$
\begin{align*}
\bar{F}_{p} = F.\,
\end{align*}
\end{remark}

\begin{proof}
From the definition of the free energy $F=-T\ln Z$ we first note that $\bar{F}_q\ge F$ is equivalent to $\bar{Z}_{q_\theta}\le Z$. Using the fact that $\textrm{supp}(e^{-S(\phi)})=\textrm{supp}(p)$, we obtain~\footnote{To make the notation more compact, in the integrals, we drop the subscript in the effective supports of both $q_\theta$ and $p$.}
\begin{align*}
    \bar{Z}_{q_\theta} &\equiv \estimateE{\widetilde{q}_\theta}{\tilde{w}(\phi)} \\
    &= \int_{\widetilde{\mathrm{supp}}(q_\theta)} \mathcal{D}[\phi]\, q_\theta(\phi) \frac{e^{-S(\phi)}}{q_\theta(\phi)} \\
    &= \int_{\widetilde{\mathrm{supp}}(q_\theta)} \mathcal{D}[\phi]\, e^{-S(\phi)} \\
    &= \int_{\widetilde{\mathrm{supp}}(q_\theta) \cap\, \textrm{supp}(p)} \mathcal{D}[\phi]\, e^{-S(\phi)} \leq \int_{\textrm{supp}(p)}\mathcal{D}[\phi]\,e^{-S(\phi)} = Z\,,
\end{align*}
where the last inequality holds because $e^{-S(\phi)}\geq 0$. Thus, we conclude $\bar{F}_q\ge F$ with the corollary that $\widetilde{\mathrm{supp}}(q_\theta)\supseteq\mathrm{supp}(p)$ implies equality $\bar{F}_q=F$. 

Similarly, 
\begin{align*}
    \bar{Z}_{p}^{-1} &\equiv \estimateE{\widetilde{p}}{\frac{q_\theta(\phi)}{e^{-S(\phi)}}} \\
    &= \frac{1}{Z} \int_{\widetilde{\textrm{supp}}(p)}  \mathcal{D}[\phi]\, e^{-S(\phi)} \frac{q_\theta(\phi)}{e^{-S(\phi)}} \\
    &= \frac{1}{Z} \int_{\widetilde{\textrm{supp}}(p)}  \mathcal{D}[\phi]\, q_\theta(\phi) \\
    &= \frac{1}{Z} \int_{\widetilde{\textrm{supp}}(p) \cap\, {\mathrm{supp}}(q_\theta)}  \mathcal{D}[\phi] \,\underbrace{q_\theta(\phi)}_{\geq 0}
    \leq \frac{1}{Z} \int_{{\mathrm{supp}}(q_\theta)}  \mathcal{D}[\phi]\, q_\theta(\phi) = Z^{-1}
\end{align*}
shows $\bar{F}_p\le F$, in general, and $\bar{F}_p=F$ given $\widetilde{{\mathrm{supp}}}(p)\supseteq{\mathrm{supp}}(q_\theta)$.

Hence, by combining the inequalities we can conclude
\begin{align*}
    \bar{F}_q\ge F\ge \bar{F}_p.
\end{align*}
\end{proof}

\subsection{Bound on the configuration}
\label{app:boundconfig}
Let us assume $S(\phi)$ to be polynomial bounded
\begin{align}\label{eq:bound1}
    C_\vee \vert\vert \phi \vert\vert^{\alpha_\vee} -D_\vee \leq S(\phi) < C_\wedge \vert\vert \phi \vert\vert^{\alpha_\wedge} + D_\wedge
\end{align}
with non negative coefficients $C_\vee,\,C_\wedge,\,D_\vee,\,D_\wedge,\,\alpha_\vee,\,\alpha_\wedge\in \mathbb{R}_{\ge0}$.
The left-hand and right-hand sides represent the lower and the upper bounds on the action $S$.
One needs to find appropriate coefficients $C_{\vee,\,\wedge}, \, \alpha_{\vee,\,\wedge},\, D_{\vee,\,\wedge}$ such that the inequalities are satisfied. 
We now do a foliation of the sampling space 
\begin{equation}\label{eq:foliation}
    \Delta_n := \{\phi \,\, | \,\, e^{(D_\vee +1 -n)} \geq e^{-S(\phi)} > e^{(D_\vee - n)}\}
\end{equation}
which can be seen as a re-distribution of the lattice configurations $\phi$ into infinitely many buckets labeled by the index $n$.
Combining \cref{eq:bound1} and \cref{eq:foliation} one can rewrite a condition on the norm of $\phi$~\footnote{We take the $2$-norm for the field configuration and drop the subscript for notation convenience. We implicitly assume $\vert\vert\phi\vert\vert\equiv\vert\vert\phi\vert\vert_2$.}. For a configuration $\phi \in \Delta_n$
\begin{align}
    C_\vee \vert\vert \phi \vert\vert^{\alpha_\vee} -D_\vee \leq S(\phi) < - D_\vee + n\,.
\end{align}
This implies
\begin{align}\label{eq:bound11}
    C_\vee \vert\vert \phi \vert\vert^{\alpha_\vee} <  n \implies \vert\vert \phi \vert\vert < \left(\frac{n}{C_\vee}\right)^\frac{1}{\alpha_\vee}
\end{align}
for $n>0$. On the other side, it follows that
\begin{align}
    C_\wedge \vert\vert \phi \vert\vert^{\alpha_\wedge} + D_\wedge \geq S(\phi) \geq -D_\vee -1 + n\,,
\end{align}
which implies that 
\begin{align}\label{eq:bound22}
    C_\wedge \vert\vert \phi \vert\vert^{\alpha_\wedge} + D_\wedge \geq -D_\vee -1 + n \implies \vert\vert \phi \vert\vert \geq \left(\frac{-D_\vee -D_\wedge -1 +n}{C_\wedge}\right)^\frac{1}{\alpha_\wedge}\,
\end{align}
for $C_\wedge>0,\, n>0$.  Combining \cref{eq:bound11} and \cref{eq:bound22} we obtain the following bounds on the norm of the lattice configuration
\begin{align}\label{eq:bigbound}
    \underbrace{\left[\frac{-D_\vee - D_\wedge -1 + n}{C_\wedge}\right]^{\frac{1}{\alpha_\wedge}}}_{\textrm{l}_n} \leq \vert\vert\phi\vert\vert< \underbrace{\left(\frac{n}{C_\vee}\right)^{\frac{1}{\alpha_\vee}}}_{\textrm{u}_n}\,.
\end{align}
In particular, this can be shown to imply a bound on the volume of $\Delta_n$, e.g., making the volume finite
\begin{align}\label{eq:bound_vol}
    \textrm{vol}(\Delta_n) \le \int_{l_n\leq\vert\vert\phi\vert\vert<u_n}d\phi = \frac{\pi^\frac{N}{2}}{\Gamma\left(\frac{N}{2} + 1\right)}\cdot \left(u_n^N-l_n^N\right)\,,
\end{align}
where we used the volume of the $N-$ball, 
\begin{align*}
    \frac{\pi^\frac{N}{2}}{\Gamma\left(\frac{N}{2} + 1\right)}\cdot r^N\,.
\end{align*}

\subsection{Proof of \texorpdfstring{\Cref{th:bias}}{Theorem 1}}
\label{app:theorem2_proof}
We now leverage the result from \cref{app:boundconfig} to derive a bound on the bias for the general observable $\mathcal{O}$ when $\widetilde{\textrm{supp}}({q}_\theta) \subset \textrm{supp}(p)$ and therefore the importance sampling estimator may not be unbiased when~\cref{eq:ModeDroppingAssumption} holds true.
\begin{remark}
Let the action $S$ of the theory and the observable $\mathcal{O}$ be polynomially bounded, i.e.
\begin{align}
    C_\vee \vert\vert \phi \vert\vert^{\alpha_\vee} -D_\vee \leq S(\phi) < C_\wedge \vert\vert \phi \vert\vert^{\alpha_\wedge} + D_\wedge
\end{align}
for $C_{\vee, \wedge}, D_{\vee, \wedge}, \alpha_{\vee, \wedge} \in \mathbb{R}_{\ge0}$ and
\begin{align}
    \vert\mathcal{O}(\phi)\vert \leq c\vert\vert\phi\vert\vert^\alpha + d
\end{align}
for $c,\alpha,d \in \mathbb{R}_{\ge0}$.
Then, the bias between the estimated observable $\hat{\mathcal{O}}$ and the true value $\mathcal{O}^*$ is given by
\begin{align}\label{eq:bias_app}
    \vert \bar{\mathcal{O}} - \mathcal{O}^*\vert \leq \sum_{n\in N}  \sup_{\phi\in\Delta_n}\vert\mathcal{O}(\phi)\vert \cdot \alpha_n\,,
\end{align}
where $\bar{\mathcal{O}} = \mathbb{E}_{\phi \sim q_\theta} \left[ \hat{\mathcal{O}}(\phi) \right]$.
\end{remark}
\begin{proof}
Let's assume the generic observable $\mathcal{O}$ to be polynomially bounded
\begin{align}\label{eq:gen_bound}
    \vert\mathcal{O}(\phi)\vert \leq c\vert\vert\phi\vert\vert^\alpha + d\,.
\end{align}

When $\widetilde{\textrm{supp}}({q}_\theta) \subseteq \textrm{supp}(p) $, \cref{eq:bias_app} can be bounded as
\begin{align}
    \vert \bar{\mathcal{O}} - \mathcal{O}^*\vert &= \left\vert \int_{\widetilde{\textrm{supp}}({q}_\theta)} \Dphi \mathcal{O}(\phi)\frac{p(\phi)}{q_\theta(\phi)}q_\theta(\phi)  - \int \Dphi \mathcal{O}(\phi) \,p(\phi)\right\vert\\
    &= \left\vert \int \left( 1 - 1_{\widetilde{\textrm{supp}}({q}_\theta)} (\phi) \right) \,\mathcal{O}(\phi)\,p(\phi)\,\Dphi\right\vert\,\\
    &\leq \sum_{n\in \mathbb{N}} \left\vert \int_{\Delta_n} \mathcal{O}(\phi) \left( 1 - 1_{\widetilde{\textrm{supp}}({q}_\theta)} (\phi) \right) p(\phi) \, \Dphi \right\vert \\
    \leq& \sum_{n\in \mathbb{N}} \sup_{\phi\in\Delta_n}\left\vert\mathcal{O}(\phi)\right\vert \cdot \underbrace{\int_{\Delta_n} \left( 1 - 1_{\widetilde{\textrm{supp}}({q}_\theta)} (\phi) \right) p(\phi) \, \Dphi}_{=:\alpha_n}\,,
\end{align} 
where $\alpha_n$ is the coefficient defined for each bucket, i.e.,
\begin{align}
\alpha_n =& \int_{\Delta_n} \left( 1 - 1_{\widetilde{\textrm{supp}}({q}_\theta)} (\phi) \right) p(\phi) \, \Dphi \nonumber \\
=& \frac{1}{Z}\int_{\Delta_n} \left( 1 - 1_{\widetilde{\textrm{supp}}({q}_\theta)} (\phi) \right) e^{-S(\phi)}  \, \Dphi \nonumber \\ 
=& \frac{1}{Z}\left(\int_{\Delta_n} e^{-S(\phi)}  \, \Dphi -  \int_{{\Delta_n}\cap\,{\widetilde{\textrm{supp}}({q}_\theta)}} e^{-S(\phi)}  \, \Dphi \right)\,, \label{eq:w_n}
\end{align}
such that the following relation holds
\begin{align}\label{eq:5}
    \sum_{n\in \mathbb{N}} \alpha_n = \int \left( 1 - 1_{\widetilde{\textrm{supp}}({q}_\theta)} (\phi) \right) p(\phi) \, \Dphi =   1 - \wbar\,.
\end{align}
One, therefore, concludes that the bias is bounded by the following series
\begin{align}\label{eq:bias}
    \vert \bar{\mathcal{O}} - \mathcal{O}^*\vert \leq& \sum_{n\in \mathbb{N}}  \sup_{\phi\in\Delta_n}\left\vert\mathcal{O}(\phi)\right\vert \cdot \alpha_n\tobias{.}
\end{align}

In order to obtain convergence of this series, we observe that polynomial boundedness of the observable implies 
\begin{align*}
     \sup_{\phi\in\Delta_n}\left\vert\mathcal{O}(\phi)\right\vert\le c\,u_n^\alpha+d\,,
\end{align*}
i.e., $\sup_{\phi\in\Delta_n}\left\vert\mathcal{O}(\phi)\right\vert$ grows polynomially in $n$. Similarly, from \cref{eq:w_n}, it follows
\begin{align}
    \alpha_n\le \frac{1}{Z} \textrm{vol}(\Delta_n)e^{D_\vee+1-n}\,,
\end{align} 
showing that $\alpha_n$ decays exponentially in $n$. Thus, $ \sup_{\phi\in\Delta_n}\left\vert\mathcal{O}(\phi)\right\vert \cdot \alpha_n$ decays exponentially in $n$. This implies convergence of the series in the right-hand side of \cref{eq:bias}.

It is important to note that each $\alpha_n$ is weighted by the maximum of the observable on the corresponding volume $\Delta_n$ which makes the bias inherently dependent on the observable while the $\alpha_n$ coefficients are universal and represent the amount of mode dropping per bucket.
\end{proof}

As an explicit example, let us consider $S:\ \mathbb{R}\to\mathbb{R}; S(\phi)=\phi^2$ and the observable $\phi^3$. This means that the true value of the observable is
\begin{align}
\mathcal{O}^*=\mathbb{E}_{\phi \sim p}\left[\mathcal{O}(\phi) \right]=\frac{1}{Z}\int_{-\infty}^{\infty} D[\phi] \phi^3\,e^{-\phi^2} =0\,.
\end{align}
If we assume a mode dropping model $q$, with $q=2p\cdot1_{\mathbb{R}_{\geq 0}}$, then
\begin{align*}
    \vert \bar{\mathcal{O}} - \mathcal{O}^*\vert=\bar{\mathcal{O}}=\frac{2}{\sqrt\pi}\int_0^\infty D[\phi]\phi^3e^{-\phi^2}=\frac12.
\end{align*}
For the definition of the $\Delta_n$, we can choose $D_\vee=1$ and thus $\Delta_n=[-\sqrt{n},-\sqrt{n-1}]\cup[\sqrt{n-1},\sqrt{n}]$ and obtain $\alpha_n=0.5(\mathrm{erf}(\sqrt{n})-\mathrm{erf}(\sqrt{n-1}))$. We note that since $u_n=\sup\{|\phi|,\ \phi\in\Delta_n\}=\sqrt n$ and the observable is $\phi^3$ it follows that $$
\sup\{\vert\mathcal{O}(\phi)\vert,\ \phi\in\Delta_n\}\le(\sqrt n)^3\,.
$$ 
Hence, the bias is within the bound given by the theorem
\begin{align*}
    \frac12=\vert \bar{\mathcal{O}} - \mathcal{O}^*\vert\le\sum_{n\in\mathbb{N}}n^{3/2}\alpha_n\approx 0.73.
\end{align*}

\subsection{Details on the Numerical Experiments}
\label{app:technical_details}
In the following, we summarize the details and setup used to perform the training of both forward- and reverse-KL normalizing flows as well as to estimate the HMC reference values. For our experiments, we focused on the action $S(\phi)$ from \cref{eq:action} as a function of $\kappa$ while keeping the coupling $\lambda=0.022$ fixed throughout the analysis. 
\subsubsection{HMC sampling}\label{app:hmc_est}
For estimating the HMC reference values of the free energy density reported in \cref{sec:results}, we followed the same approach as in \cite{nicoli2021estimation}. The idea is to discretize the trajectory in the $\kappa$-space (hopping parameter) into a sequence of finite steps where free energy differences can be calculated by running an HMC. The target free energy at an arbitrary point $\kappa^*$ is then obtained by summing up all the free energy differences from $\kappa=0$ to $\kappa=\kappa^*$. We note that the higher the kappa values, the more steps one needs to make in order to discretize the trajectory up to the target point in parameter space. It follows that the uncertainty on the estimates also grows when $\kappa^*$ increases as more terms are combined to obtain the free energy at the desired $\kappa^*$. 
Specifically, in our experiments, we chose a regular step-size between two subsequent $\kappa$ values in the trajectory to be $\Delta\kappa=0.01$. Such step-size is used to discretize the trajectory starting from $\kappa=0.0$ all the way up to the target. For instance, measuring the free energy density at $\kappa^*=0.3$ would therefore require thirty steps, hence 30 independent HMC chains. Each of these chains is initialized around the vacuum expectation value (vev), has an overrelaxation every 10 steps, and a total of 10k thermalization steps, i.e. discarded configuration updates, followed by 500k sampling steps. Those configurations, from the equilibrium distribution, are used to estimate the free energy difference at a single given point of the trajectory.
The total number of HMC samples needed to estimate the free energy at an arbitrary point thus depends on $\kappa^*$. For instance, referring to the previous example, 30 chains with 500k steps each add up to 15M HMC samples.

\subsubsection{Reverse-KL Flow}
To train the reverse-KL normalizing flows we followed the same strategy presented in \cite{nicoli2021estimation} with the same setup of hyperparameters. We used a batch size of 8k samples and a learning rate update according to the ReduceLROnPlateau scheduler of PyTorch with an initial learning rate of $5 \times 10^{-4}$ and patience of 3k steps. The flows have the same number of coupling blocks and the same type of checkerboard partitioning discussed in \cite{nicoli2021estimation}. Models were trained for 700k steps in total and the last saved checkpoint is used for sampling. Every reverse-KL model was trained on two GPUs (in parallel), either P100 or A100 NVIDIA devices. Depending on the lattice volume and the model type the training took up to ~50hrs of wall time.

\subsubsection{Forward-KL Flow}
Training a forward-KL flow requires a different procedure which was outlined in \cref{app:forwardtraining}. For every flow model, we used 50M pre-sampled HMC configurations as input data. These were sampled in batches of 100 independent HMC chains each of which had 10k equilibration (discarded) and 500k sampling (stored) steps. The stored configurations from each chain in the batch were concatenated to generate the full training set. \par
At the stage of training, the 50M configurations are loaded in batches of 8k samples per iteration (training step). When the entire dataset is processed once, the full set of configurations is reshuffled and reused (as it is standard practice in deep learning) until the desired number of training iterations is reached. Again, forward-KL models were trained for 700k steps on two GPUs (in parallel), either P100 or A100 NVIDIA devices. Depending on the lattice volume and the model type the training took up to ~55hrs of wall time.

\subsubsection{Flow sampling}
For sampling configurations from both forward- and reverse-KL normalizing flows we proceed as follows. In order to have a fair comparison with HMC one would need to sample as many configurations as those needed to integrate the trajectory in the hopping parameter space, as discussed in \cref{app:hmc_est}. However, for our flow estimates, we took only 1M configurations and used the estimators for mean and variance introduced in \cite{nicoli2020asymptotically,nicoli2021estimation} and proposed in \cref{sec:F_estimators}. Though 1M is in general a lower bound on the total amount of configurations used to compute HMC estimates, we empirically observed this was sufficient to obtain estimates with errors several orders of magnitude smaller than HMC. Therefore, we took this as a sufficient number of samples for comparing the two sampling approaches.
\end{document}